\newcommand{\mfd}{\mathcal{M}}
\newcommand{\hil}{\mathcal{H}}
\newcommand{\R}{\mathbb{R}}
\newcommand{\A}{\mathcal{A}}
\newcommand{\C}{\mathbb{C}}
\newcommand{\M}{\mathcal{M}}
\newcommand{\W}{\mathcal{W}}
\newcommand{\CS}{{C^\infty_0(\M)}}
\newcommand{\dd}{\mathrm{d}}
\newcommand{\rr}[1]{\left(#1\right)}
\newcommand{\bx}{{\bm{x}}}
\newcommand{\bk}{{\bm{k}}}
\newcommand{\kk}{|\bm{k}|}
\newcommand{\sx}{\mathsf{x}}
\newcommand{\sy}{\mathsf{y}}
\newcommand{\sz}{\mathsf{z}}
\newcommand{\ii}{\mathsf{i}}
\DeclareMathOperator{\supp}{\text{supp}}
\DeclareMathOperator{\tr}{\mathrm{Tr}}
\renewcommand{\tilde}{\widetilde}
\renewcommand{\bar}{\overline}
\newcommand{\AAA}{\text{A}}
\newcommand{\BB}{\text{B}}
\newcommand{\CC}{\text{C}}
\newcommand{\Sol}{\mathsf{Sol}}
\renewcommand{\Im}{\mathrm{Im}}
\newcommand{\ketbra}[2]{{\left| {#1} \right\rangle \!\!\left\langle {#2} \right|}}
\newcommand{\Fk}[1]{F_\bk^{(\text{#1})}}
\newcommand{\kako}[1]{\left( #1 \right)}
\newcommand{\ts}[1]{_{\mathrm{#1}}}
\newtheorem{proposition}{Proposition}
\newtheorem{assumption}{Assumption}
\begin{document}
 
\title{Bipartite and tripartite entanglement in pure dephasing relativistic spin-boson model}

\author{Kensuke Gallock-Yoshimura}
\email{gallockyoshimura@biom.t.u-tokyo.ac.jp} 
\affiliation{
Department of Information and Communication Engineering, 
Graduate School of Information Science and Technology,
The University of Tokyo, 
7–3–1 Hongo, Bunkyo-ku, Tokyo 113–8656, Japan
}
\affiliation{Department of Physics, Kyushu University, 
744 Motooka, Nishi-Ku, Fukuoka 819-0395, Japan}

\author{Erickson Tjoa}
\email{erickson.tjoa@mpq.mpg.de}
\affiliation{Max-Planck-Institut f\"ur Quantenoptik, Hans-Kopfermann-Stra\ss e 1, D-85748 Garching, Germany}

\date{\today}

\begin{abstract}

We study nonperturbatively the entanglement generation between two and three emitters in an exactly solvable relativistic variant of the spin-boson model, equivalent to the time-independent formulation of the Unruh-DeWitt detector model. We show that (i) (highly) entangled states of the two emitters require interactions very deep into the light cone, (ii) the mass of the field can generically improve the entanglement generation, (iii) while it is possible to find regimes with genuine Greenberger–Horne–Zeilinger-like tripartite entanglement, it is difficult find regimes where tripartite entanglement can be easily shown to be significant or classified. Result (iii), in particular, suggests that probing the multipartite entanglement of a relativistic quantum field nonperturbatively requires either different probe-based techniques or variants of the Unruh-DeWitt model. Along the way, we provide the regularity conditions for the $N$-emitter model to have well-defined ground states in the Fock space. 
\end{abstract}

\maketitle

\section{Introduction}

The Unruh-DeWitt (UDW) model \cite{Unruh1979evaporation,DeWitt1979} is a well-known model of light-matter interaction that has been used to study relativistic phenomena predicted from quantum field theory (QFT) in curved spacetimes, such as the Unruh and Hawking effects \cite{hawking1975particle,wald1994quantum,Crispino2008review,Aubry2014derivative,Unruh1979evaporation,DeWitt1979,tjoa2022unruh}. The model has been generalized in different ways and applied to many different contexts to study entanglement and communication (see, e.g., \cite{Lopp2021deloc,Tales2020GRQO,perche2022localized,Nadine2021delocharvesting,doukas2013unruh,hu2022qhorad,hotta2020duality,perche2023particle,gale2023relativisticCOM,tjoa2023qudit,perche2024udwfromQFT,tjoa2022capacity,Landulfo2016communication,Simidzija2020capacity,Simidzija2018no-go,kasprzak2024transmission,pologomez2024delta,gallock2025qudit,landulfo2024broadcast,tjoa2023nonperturbative,mendez2023tripartite}). In \cite{Tjoa2024interacting}, it was argued that it is sometimes useful to view the UDW model as a relativistic variant of the spin-boson (SB) model \cite{amann1991spinboson,spohn1989spinboson,fannes1988equilibrium}. Physically, the main difference between the standard UDW model and the relativistic spin-boson (RSB) model is that the latter is described by a time-independent Hamiltonian---as a closed system---while the former is time-dependent. Terminology-wise, a two-level system is often called a \textit{particle detector}\footnote{This is motivated by Unruh and Hawking effects, as particle number is observer-dependent \cite{Unruh1979evaporation,hawking1975particle,Takagi1986noise,birrell1984quantum,unruhwald1984detector}.} in the UDW model, and an \textit{emitter} in the (R)SB model; we will use them interchangeably.

Here we will first revisit the problem of entanglement generation between two UDW detectors interacting with the vacuum state of the noninteracting scalar field, and then initiate the study of tripartite entanglement generation in this model. 
We are particularly interested in nonperturbative results by going into the \textit{pure dephasing regime}, i.e., when the interaction Hamiltonian commutes with the detectors' free Hamiltonian \cite{tjoa2023nonperturbative}. 
The special case of this is when the detectors are \textit{gapless}, which has been studied in the UDW framework in the context of entanglement harvesting and relativistic communication \cite{Landulfo2016communication, Simidzija2018no-go, Simidzija2017coherent, landulfo2024broadcast, perche2024closedform}. 
We will, however, do this using the closed-system RSB formulation following \cite{Tjoa2024interacting} for several reasons: it gives us a basis for controlling the ultraviolet/infrared (UV/IR) regularity of the model, it makes it easier to interpret some of the results we present regarding the causal behavior of the model, and it follows more closely the standard practice in quantum optics. 
We note that while the dynamics of the pure dephasing regime and the gapless regime are the same, the kinematics are different: for example, the gapless RSB model has degenerate joint ground states \cite{Tjoa2024interacting,amann1991spinboson,spohn1989spinboson,fannes1988equilibrium}.

More specifically, we use the nonperturbative RSB model to study how the entanglement generated between two and three initially uncorrelated detectors via local interactions with the field depends on its wave properties dictated by the \textit{strong Huygens principle}, which governs the support of the Green's functions of the wave equation \cite{McLenaghan1974huygens, Sonego1992huygenscurved, Faraoni2019huygens}. 
There are primarily two reasons that motivate our analysis. 
First, on the one hand there is a no-go theorem forbidding entanglement extraction from the vacuum in the gapless model if the detectors' interaction regions are spacelike-separated \cite{Simidzija2018no-go}, while for a massless field in (3+1)D Minkowski spacetime two detectors can be entangled (as expected) if we allow for causal contact \cite{perche2024closedform}. 
On the other hand, perturbatively the maximal entanglement occurs at the light cone \cite{tjoa2021harvesting} as it is enhanced by signalling, which in turn depends on the strong Huygens principle. 
Second, we would like to study the tripartite entanglement scenario, which has been done for a closely related \textit{delta-coupled} variant \cite{mendez2023tripartite,mendez2022tripartite,membrere2023tripartite}, as this will inform us of the extent to which the gapless model is useful for studying multipartite entanglement in the context of QFT in curved spacetimes. 
These will in turn close the (remaining) gap in the literature for the gapless UDW model.

In this work, we show that (i) (highly) entangled states of the two emitters require interactions \textit{very deep} into the light cone---the \textit{entanglement cone} marking the boundary for nonzero entanglement is deeper in the light cone interior with increasing mass and spacetime dimension; (ii) the mass of the field can generically improve the entanglement generation, at the expense of longer interaction time; we argue, however, that this improvement is somewhat independent of the strong Huygens principle; (iii) while it is possible to find regimes with genuine GHZ-like tripartite entanglement, the parameter space for which this occurs is very small. 
Furthermore, it is difficult to find regimes where tripartite entanglement can be easily shown to be significant or can be easily classified into different inequivalent types. 
Result (iii), in particular, suggests that probing the multipartite entanglement of a relativistic quantum field nonperturbatively via localized probes requires either different probe-based techniques or variants of the UDW model.

This work is organized as follows. 
In Sec.~\ref{sec: setup}, we briefly sketch the setup and establish notation and state some UV/IR regularity requirements for the RSB model. 
In Sec.~\ref{sec: entanglement-two} we discuss the entanglement dynamics for two detectors/emitters. In Sec.~\ref{sec: entanglement-three} we discuss the entanglement dynamics for three detectors/emitters. We adopt the convention that $c=\hbar=1$ and use mostly plus signature for the metric. The symbol $\sx$ denotes a spacetime event.

\section{Setup}
\label{sec: setup}

{In this section, we briefly cover the basic ingredients involved in our analysis. We will first briefly review scalar field theory in the algebraic quantum field theory (AQFT), the formulation of UDW model as a closed-system RSB model, and the strong Huygens principle. In order to focus on the physics, we will only cover the bare minimum required to follow this work. The readers can find much more detailed mathematical construction in the following references: for AQFT, we mainly rely on a very accessible exposition of the scalar field theory in \cite{hack2015cosmological} (see also \cite{fewster2019algebraic,KayWald1991theorems,Khavkhine2015AQFT}), while for the UDW model we summarize the exposition in \cite{Tjoa2024interacting}, which is essentially inspired by \cite{spohn1989spinboson,fannes1988equilibrium}.}

\subsection{Scalar field theory in curved spacetime}

The starting point is the equation of motion for a real scalar field in an $(n+1)$-dimensional globally hyperbolic Lorentzian spacetime $(\M,g)$ where $g$ is the metric tensor. 
We have the Klein-Gordon equation
\begin{align}
    P\phi \coloneqq (-\square+m^2+\xi R)\phi(\sx) = 0\,,
    \label{eq: KGE}
\end{align}
where $\square = \nabla_a\nabla^a$ is the d'Alembert operator, $m$ is the mass parameter of the field, $R$ is the Ricci scalar curvature and $\xi\geq 0$ quantifies the degree of nonminimal coupling to gravity. 
Global hyperbolicity and the nature of the Klein-Gordon differential operator $P$ guarantee that the initial value problem for \eqref{eq: KGE} is well posed and the spacetime admits a foliation of the form $\M = \R\times \Sigma$, where $\Sigma$ is any Cauchy surface that serves as a constant-time slice. 
{For example, we can foliate Minkowski spacetime, the simplest spacetime with zero curvature, in terms of constant-$t$ spacelike hypersurfaces where $t$ is the usual inertial time coordinate.}

As a hyperbolic differential operator, $P$ admits unique \textit{advanced and retarded propagators} $E_A(\sx,\sx')=E_R(\sx',\sx)$: they are the Green's functions for $P$ where $E_R(\sx,\sy)$ [resp. $E_A(\sx,\sy)$] vanishes if $\sx$ is not contained in the causal future (past) of $\sy$. 
We can then define the so-called \textit{causal propagator}\footnote{In \cite{Tjoa2024interacting} it was clarified how the various conventions of the Green's functions lead to different conventions of the causal propagator, and they can be somewhat confusing. 
This is because these variations are not fixed solely by the metric signature. 
Fortunately, in most practical calculations, it does not really matter, as they only affect the overall minus sign that one can typically track. Here we follow \cite{hack2015cosmological}, which seems to be one of the cleanest.}
\begin{align}
    E(\sx,\sy)\coloneqq E_R(\sx,\sy)-E_A(\sx,\sy)\,,
\end{align}
which is now a kernel of $P$. The causal propagator is important as it appears in the {covariant} canonical commutation relation (CCR) of the quantized scalar field
\begin{align}
    [\hat{\phi}(\sx),\hat{\phi}(\sy)] = \ii E(\sx,\sy)\openone\,,
\end{align}
where $\openone$ is the identity operator. 
$E(\sx,\sy)$ is a (bi-)distribution since $\hat{\phi}(\sx)$ is an \textit{operator-valued distribution}, i.e., it is only a valid operator acting on some Hilbert space after smearing with some appropriate test function $f$. The mathematical formulation based on AQFT thus defines the \textit{smeared field operator}
\begin{align}
    \hat{\phi}(f)\coloneqq \int_\M \dd V\,f(\sx)\hat{\phi}(\sx)\,,
\end{align}
where $\dd V=\sqrt{-\det g}\, \dd^n \bx$ is the invariant volume element and $f\in \CS$ is a smooth compactly supported function on $\M$. Sometimes the spacetime smearing function $f$ is interpreted as a localization region of the operator, which reflects our inability to resolve a single point in spacetime.

Within the AQFT framework, the quantized scalar field theory is specified by an \textit{algebra of observables} $\A(\M)$ and a \textit{state} $\omega$. 
The algebra of observables $\A(\M)$ is formally defined as a unital $*$-algebra of all possible sums and products of smeared field operators $\hat{\phi}(f)$ for all possible test functions $f$ and their adjoints satisfying the smeared CCR
\begin{align}
    [\hat\phi(f),\hat{\phi}(g)] = \ii E(f,g)\openone\,,
    \quad 
    \forall f,g \in \CS
\end{align}
where
\begin{align}
    E(f,g) = \int_{\M \times \M} \dd V\dd V'\,f(\sx)E(\sx,\sx') g(\sx')\,.
\end{align}
The Klein-Gordon equation is enforced by the requirement that $\hat{\phi}(f) = 0$ whenever $f=Ph$ for some $h\in \CS$. Note that the properties of the Green functions guarantee that $E(f,g)=0$ whenever $f,g$ have spacelike supports. 
For real scalar fields, this imposes that the generators of $\A(\M)$ are smeared field operators with real test functions.

{In order to do physics, we need to specify states in addition to observables. A state is a $\C$-linear functional $\omega:\A(\M)\to \C$ satisfying positivity $\omega(A^\dagger A)\geq 0$ and normalization $\omega(\openone) = 1$ for $A\in \A(\M)$, i.e., it is a map that takes in an observable $A$ as an input and outputs its expectation value $\omega(A)$. 
Through the Gelfand-Naimark-Segal (GNS) reconstruction theorem, we can obtain the Hilbert space representation of the theory, with the GNS vector $\ket{\Omega_\omega}$ identified as the ``vacuum state'' of the theory, and expectation values are given by $\omega(A) = \braket{\Omega_\omega|\pi_\omega(A)|\Omega_\omega}$. 
Here $\pi_\omega: \A(\M)\to L(\mathcal{H}_\omega)$ is a GNS representation such that $A$ is represented as a linear operator $\pi_\omega(A) \in L(\hil_\omega)$ acting on a (dense subset of) the GNS Hilbert space $\mathcal{H}_\omega$. 
The main advantage of the AQFT framework is that it treats all Hilbert space representations democratically, as there are infinitely many unitarily inequivalent representations of the CCR algebra \cite{KayWald1991theorems, Khavkhine2015AQFT, fewster2019algebraic, hack2015cosmological}, a feature that is important for studying physical systems that are directly defined in the thermodynamic limit \cite{bratteli2002operatorv2}. 
Indeed, the spin-boson model has been studied in the algebraic framework for the same reason \cite{fannes1988equilibrium,spohn1989spinboson,amann1991spinboson,hasler2011ground,hasler2021existence}.

For physical applications, not all states are physically relevant. 
First, in a relativistic context, we need the states to be \textit{Hadamard states} \cite{Radzikowski1996microlocal, fewster2013hadamard, fewster2025hadamard}. 
Concretely, in $(3+1)$-dimensional spacetimes, this corresponds to the states whose Wightman two-point functions have the correct ``short-distance (UV) behavior'' and they take the form 
\begin{align}
    \mathsf{W}(\sx,\sy) \sim \frac{u(\sx,\sy)}{\sigma(\sx,\sy)} + v(\sx,\sy) \log(\sigma(\sx,\sy)) + w(\sx,\sy)\,, \label{eq:Hadamard Wightman}
\end{align}
where $u,v,w$ are smooth functions, and $\sigma(\sx,\sy)$ is \textit{Synge's world function}, defined as half the squared geodesic distance between $\sx,\sy \in \M$ with respect to the spacetime metric $g$. 
Eq.~\eqref{eq:Hadamard Wightman} should be viewed as a (bi-)distribution, where the first two terms in \eqref{eq:Hadamard Wightman} only depend on the metric and Klein-Gordon equation, and the state dependence only influences the UV-finite, regular part $w$.

Second, we need the GNS Hilbert space $\mathcal{H}_\omega$ to be separable: this is \textit{a priori} not guaranteed (see, e.g., \cite{witten2022doesQFTmake} for an explicit example in quantum spin chains in the thermodynamic limit). 
If we restrict, however, to a subclass of Hadamard states that are \textit{quasifree} (Gaussian), i.e., those whose Wightman $n$-point functions are fully specified by their 1- and 2-point functions via Wick's theorem, then the resulting quasifree representations give rise to Fock space $\mathcal{H}_\omega = \mathfrak{F}(\mathcal{H})$ (which is separable), where $\mathcal{H}$ \cite{fewster2019algebraic, Khavkhine2015AQFT, KayWald1991theorems, hack2015cosmological} is the so-called \textit{one-particle Hilbert space} associated with the positive-frequency solutions $\varphi_1, \varphi_2$ of the Klein-Gordon equation with Klein-Gordon inner product
\begin{align}
    \braket{\varphi_1,\varphi_2}_\textsf{KG} 
    = 
        \ii \int_{{\Sigma_t}} \dd \Sigma^a\,\varphi_1\nabla_a\varphi_2^* - \varphi_2\nabla_a\varphi_1^*\,.
\end{align}
Following this construction outlined in \cite{KayWald1991theorems,fewster2019algebraic, Khavkhine2015AQFT, Tjoa2024interacting, tjoa2023nonperturbative, kasprzak2024transmission}, we recover the standard Fock space construction in canonical quantization, where we can write the field operator in terms of the Fourier mode decomposition
\begin{align}
    \hat{\phi}(\sx) 
    = 
        \int_{\R^n} \dd^n\bk\,
        \hat{a}_\bk u_\bk(\sx) + \hat{a}_\bk^\dagger u^*_\bk(\sx)\,,
\end{align}
where $u_\bk(\sx)$ (resp. $u_\bk^*$) are delta-normalized eigenbasis of the one-particle Hilbert space $\mathcal{H}$ (resp. conjugate Hilbert space $\bar{\mathcal{H}}$)\footnote{The one-particle Hilbert space is related to the (complexified) solution space of \eqref{eq: KGE} $\Sol_\C(\M)$ via $\Sol_\C(\M) = \mathcal{H}\oplus \bar{\mathcal{H}}$, i.e., all complex solutions are linear combinations of positive and negative frequency modes.} called the \textit{positive (resp. negative) frequency modes}. 
These modes satisfy the Klein-Gordon equation \eqref{eq: KGE} and they are delta-normalized with respect to the Klein-Gordon inner product
\begin{align}
    \braket{u_\bk,u_{\bk'}}_\textsf{KG} 
    &= 
        \delta^n(\bk-\bk')\,, \notag\\
    \braket{u^*_\bk,u^*_{\bk'}}_\textsf{KG} 
    &= 
        -\delta^n(\bk-\bk')\,,\\ 
    \braket{u_\bk,u^*_{\bk'}}_\textsf{KG} 
    &=
        0\notag\,,
\end{align}
and the creation and annihilation operators obey the CCR $[\hat{a}_\bk,\hat{a}_{\bk'}^\dagger] = \delta^n(\bk-\bk')\openone$. 
The Fock vacuum state of the theory, associated with the GNS vector, is conventionally denoted as $\ket{0}$ and satisfies the distributional equality $\hat{a}_\bk\ket{0}=0$ for all $\bk$. 
All other states with finite particle number can then be built from applications of (smeared) creation operators on the vacuum state and taking linear combinations.

For many practical computations and in order to avoid domain issues associated with unbounded operators $\hat{\phi}(f)$, it is often convenient to work instead with the so-called \textit{Weyl algebra} $\W(\M)$, which is formally an exponentiated version of the field operator, i.e., it is a $C^*$-algebra generated by objects of the form $e^{\ii \phi(f)}$ (see \cite{fewster2019algebraic} for more details). 
One can then find the corresponding state whose GNS construction gives the same physical theory, with the GNS representation $\Pi_\omega:\W(\M)\to \mathcal{B}(\mathcal{H}_\omega)$ 
related via derivatives.\footnote{Strictly speaking, the states used for $\A(\M)$ and $\W(\M)$ are not the same, but there is a bijection between them \cite{ruep2021weakly}.} We note that the state $\omega$ is completely specified by its Wightman $n$-point function:
\begin{align}
    W(f_1, \ldots, f_n) 
    \coloneqq 
        \omega(\hat{\phi}(f_1) \ldots \hat{\phi}(f_n))\,.
\end{align}

For our purposes, it is useful to write
\begin{align}
    \hat{\phi}(f) 
    &= 
        \hat{a}(f_\bk^*) + \hat{a}^\dag(f_\bk) 
    \equiv 
        \hat{\phi}(f_\bk)\,,
\end{align}
where
\begin{align}
    \hat{a}(f_\bk^*) 
    = 
        \int_{\R^n} \dd^n \bk\,
        \hat{a}_\bk f^*_\bk\,,
        \quad 
    f_\bk 
    = 
        \int_\M \dd V\,f(\sx)u_\bk^*(\sx)\,.
\end{align}
This relabeling is useful as it emphasizes which of the field modes we are accessing by choosing a particular spacetime smearing function $f$. 
This ``momentum space'' representation of the field operators will be used often for the rest of this work. 
In Minkowski space, we note that the mode functions are exactly known, and thus they serve as examples where calculations can be done explicitly: in the standard inertial coordinates $(t,\bx)$, we have
\begin{align}
    u_\bk(t,\bx) = \frac{1}{\sqrt{2(2\pi)^n\omega_\bk}}e^{-\ii\omega_\bk t +\ii \bk\cdot \bx}\,,
\end{align}
where $\omega_\bk ^2= |\bk|^2+m^2$ is the relativistic dispersion relation. 
}

\subsection{UDW model as a time-independent RSB model}

The setup we have in mind involves two UDW detectors or emitters A and B interacting locally with a relativistic scalar field. 
Let us recall the standard prescription based on the covariant formulation of the original UDW model in terms of the Hamiltonian density in the interaction picture \cite{Tales2020GRQO}.

In the original UDW framework, we first assume that the center of mass (COM) of each detector follows a timelike trajectory $\sz_j(\tau_j)$, $j \in \{ \AAA,\BB \}$, parametrized by their own proper time $\tau_j$, and then in the COM frame we establish the so-called Fermi normal coordinates (FNC) adapted to the trajectory $\bar{\sx}\equiv (\tau,\bar{\bx})$ where the COM trajectory is parametrized by $\sz_j(\tau_j) \equiv (\tau_j,\bar{\bm{0}})$ \cite{poisson2011motion}. 
The point of setting up an FNC is to be able to make a physically reasonable assumption that the interaction between the detector and the field is specified by the spacetime smearing of the form $f(\sx(\bar{\sx})) = \chi(\tau)F(\bar{\bx})$, i.e., in the rest frame of the detector we can tell apart the spatial profile $F(\bar\bx)$ (associated with, say, atomic orbitals of an atom) from the \textit{switching function} $\chi(\tau)$ that determines the duration and strength of the interaction. 
This implicitly assumes that the spatial profile is time-independent within the rest frame, i.e., we are working in a regime where the detector is (Born-)rigid. 
The interaction Hamiltonian density in the interaction picture thus takes the form
\begin{align}
    \hat{h}_{\text{I}}^{(j)}(\bar{\sx}) 
    &= 
        \lambda_jf^{(j)}(\tau,\bar{\bx})\hat{O}_j(\tau) \otimes \hat{\phi}(\sx(\tau,\bar{\bx}))\,,
    \label{eq: UDW}
\end{align}
where $j=A,B$, $\lambda_j \geq 0$ is the coupling constant and $\hat{O}_j$ is some detector observable. 
We can then compute the unitary operator generated by this Hamiltonian in the interaction picture, namely
\begin{align}
    \hat{U}\ts{I}
    = 
        \mathcal{T}_t
        \exp 
        \left[
            -\ii \int_\M \dd V\,
            \hat{h}\ts{I}^{(\AAA)}(\sx)+\hat{h}\ts{I}^{(\BB)} (\sx)
        \right] ,
\end{align}
where $\mathcal{T}_t$ is a time-ordering symbol with respect to some global time function. In most cases, one has to perform a perturbative Dyson series truncation, and even then it is mostly intractable except in special cases, such as (conformally) flat spacetimes and specific trajectories, unless one takes a specific limit where the unitary operator above can be calculated exactly---this happens for the so-called delta-coupling and pure-dephasing models \cite{tjoa2023nonperturbative, landulfo2024broadcast, Landulfo2016communication, pologomez2024delta, Simidzija2018no-go, tjoa2022capacity, Simidzija2020capacity, kasprzak2024transmission, mendez2023tripartite, gallock2025qudit}.

As argued in \cite{Tjoa2024interacting}, the UDW model has the nice property that its interaction can be made strictly local in spacetime, so that two detectors A and B can be said to be spacelike separated or not by checking the supports of their interaction profiles $f^{(j)}$.  This support region is often taken to represent the portion of spacetime operationally accessible to the corresponding observer, although the algebra of observables that the observer can effectively access can be strictly larger than the interaction region.\footnote{For example, a pointlike detector has interaction profile $f$ supported only along its timelike trajectory, but the accessible algebra of observables is prescribed by its \textit{timelike envelope}, the smallest causal diamond containing $\supp f$.} {Taking this difference into account, we can instead reframe the UDW model in the spirit of standard quantum optics, which goes by the name of \textit{spin-boson (SB) models}. 
Traditionally, the SB model is defined in flat spacetime and directly in the momentum representation: for a single two-level emitter, its Hamiltonian reads
\begin{align}
    \hat{H} 
    &= 
        \frac{\Omega}{2} \hat{\sigma}_z 
        + 
        \Delta \hat \sigma_x 
        + 
        \int_{\R^n} \dd^n\bk\,
        \omega_\bk \hat{a}^\dagger_\bk \hat{a}_\bk &\notag\\
    &\hspace{0.5cm}+ 
        \hat{\sigma}_x
        \otimes 
        \int_{\R^n} \dd^n\bk\, 
        \rr{
            \hat{a}_\bk F^*_\bk + \hat{a}_\bk^\dagger F_\bk
        }\,,
    \label{eq: SB}
\end{align}
where $\Omega$ is the energy gap, $\Delta$ measures the shift due to the detector's internal ``magnetic'' response, the third term is the field's free Hamiltonian, and $F_\bk$ is called the \textit{coupling function} that prescribes which modes the detector couples to and by how much. 
By construction, this model is \textit{time-independent} as the coupling functions are not dependent on time. 
We can think of this as a  ``closed system'' formulation of the UDW model where we model the interaction to be always on. 
Despite this constant-switching interaction, the spacetime locality of observers is still well defined in the sense that the support of the interaction is localized in space, while the observables accessible to each observer are localized in space and time. 
Reformulating the UDW model in terms of the constant-switching SB Hamiltonian in \eqref{eq: SB}, we have 
\begin{align}
    \hat H_{\Omega, \Delta, \lambda}
    &=
        \sum_j \hat H_0^{(j)}
        + 
        \hat H_{\phi,0}
        +
        \sum_j \hat H\ts{int}^{(j)}\,,\label{eq: SB-UDW}
\end{align}
where $\hat H_0^{(j)}$ and $\hat H_{\phi,0}$ are the free Hamiltonians for detector-$j$ and the field, respectively, and $\hat H\ts{int}^{(j)}$ is the interaction Hamiltonian that couples detector-$j$ and the field. 
These are explicitly written as 
\begin{subequations}
\begin{align}
    \hat H_0^{(j)}
    &=
        \frac{\Omega_j}{2}\hat{\sigma}_z^{(j)} 
        + \Delta_j \hat\sigma_x^{(j)}\,, \label{eq:RSB free Hamiltonian}\\
    \hat H_{\phi,0}
    &=
        \int_{\R^n} \dd^n\bk\,
        \omega_\bk\hat{a}^\dagger_\bk\hat{a}_\bk \\
    \hat H\ts{int}^{(j)}
    &=
        \hat{\sigma}_x^{(j)}
        \otimes
        \int_{\R^n} \dd^n\bk\,
        \rr{\hat{a}_\bk F_{\bk}^{(j)*} + \hat{a}_\bk^\dagger F_{\bk}^{(j)} }\,, \label{eq:RSB interaction Hamiltonian}
\end{align}
\end{subequations}
where 
\begin{align}
    F_{\bk}^{(j)} 
    &= 
        \lambda_j 
        \int_{\Sigma_\tau} \dd \Sigma_\tau\,
        F^{(j)}(\bar{\bx})u^*_\bk(0,\bx(\bar{\bx}))\,,
    \label{eq: coupling function}
\end{align}
with respect to some foliation associated with constant-$\tau$ surfaces. Note that the coupling constants are dimensionful as $[\lambda_j] = [\text{Length}]^{(n-3)/2}$ in $(n+1)$-dimensional spacetime. 

For convenience, we will distinguish the original UDW model described by the Hamiltonian \eqref{eq: UDW} from our time-independent, SB-type formulation in \eqref{eq: SB-UDW} by calling the latter the relativistic spin-boson (RSB) model. This terminology can be understood from the idea that in momentum space, what distinguishes the model from a generic SB model is the relativistic dispersion. 
}

There are obvious limitations to this formulation that the original UDW model \eqref{eq: UDW} does not have \cite{Tjoa2024interacting}. 
As currently stated, it only works for \textit{static trajectories} with respect to the quantization frame, as otherwise $F_{\bk}^{(j)}$ in Eq.~\eqref{eq: coupling function} would be time-dependent, which would correspond to time-dependent coupling (as is the case for Unruh effect or adiabatic switching). 
It follows that, for multiple detectors, all COM trajectories are static relative to one another, and must align with the Killing vector field generating time translations of the spacetime. 
Consequently, the SB framework with a time-independent Hamiltonian by default excludes some of the natural settings where the usual UDW approach thrives, e.g., the analysis of relative motion or accelerating trajectories. 
We will not discuss how to handle these issues in this work, and we refer to \cite{deBievre2006unruh} for how this is handled in the case of the Unruh effect.

{These limitations notwithstanding, we name two reasons why the RSB formulation is relevant for us.
\begin{enumerate}[label=(\alph*),leftmargin=*]
    \item It allows for a rigorous treatment of the model, i.e., whether it has good UV and IR properties. 
    For example, in the single-emitter case, the existence of ground states of the model depends on the UV/IR regularity of the coupling function $F_\bk$ \cite{fannes1988equilibrium, amann1991spinboson, spohn1989spinboson}. 
    In the UDW language, this corresponds to making an appropriate choice of spatial smearing $F(\bx)$, whose properties are tied to the Klein-Gordon equation and the spacetime dimension \cite{Tjoa2024interacting}.\footnote{In the UDW framework, it is often the case that Gaussian smearing is used for calculational convenience, but its noncompact support leads to mild and quantifiable degrees of causality violation in these models \cite{Causality2015Eduardo, Bruno2020time-ordering, pipo2021causality, maria2023causality}. 
    Provided the detectors are not pointlike, spatial smearings often grant good UV regularity; what remains is to check that the model is also IR regular.} 
    There have been a few nice developments providing rigorous studies of the thermalization of an Unruh-DeWitt detector in the weak-coupling regime \cite{deBievre2006unruh,passegger2024disjoint,fewster2016waiting}.

    \item It allows for analysis of relativistic causality in a manner closer to the standard quantum optics. 
    For example, it is much simpler to specify how long the interaction has to occur before the detectors can effectively signal to one another by checking that $t\gtrsim L$, where $L$ is the separation of their COM. 
    This is particularly useful for one of our results, where we will show that significant entanglement is only generated ``deep in the light cone interior'' (c.f., Sec.~\ref{sec: entanglement-two}).  
\end{enumerate}
For completeness, we will extend the regularity requirements for single detectors in \cite{Tjoa2024interacting} to $N$ detector settings. 
These will be relevant for a better understanding of how the entanglement dynamics of the two initially separable detectors depend on the mass of the field and the spacetime dimensions. 
This goes by the name \textit{strong Huygens principle}, which we briefly discuss below.
}

\subsection{Strong Huygens principle}

One of our goals is to understand the impact of the mass and spacetime dimensions on the entanglement dynamics between two detectors in the nonperturbative regime, since these modify the signaling contribution to the dynamics. 
The perturbative analysis has been covered in the literature \cite{tjoa2021harvesting, Casals2020commBH, hector2022optimize}.

Roughly speaking, a field $\phi$ is said to satisfy the strong Huygens principle when its Green's function has distributional support only on the light cone \cite{McLenaghan1974huygens}. Intuitively, this means that an impulsive light emanated from an emitter can be captured only by a lightlike separated receiver. When support extends inside the light cone, we say the strong Huygens principle is violated, and a signal can be captured by a timelike receiver. By construction, massive fields violate the strong Huygens principle as massive excitations travel at subluminal speed. The simplest example of the strong Huygens principle is the massless scalar field in $(3+1)$-dimensional Minkowski spacetime, where in the inertial coordinates the causal propagator is given by
\begin{align}
    E(\sx,\sx') &= \frac{\delta(t-t'+|\bx-\bx'|) - \delta(t-t'-|\bx-\bx'|)}{4\pi |\bx-\bx'|}
\end{align}
Clearly, the field obeys the strong Huygens principle by virtue of the fact that the Dirac delta function has no support except when $\Delta t = \pm |\Delta \bx|$, i.e., two points are null-separated. 
It is also known that in $(n+1)$-dimensional flat spacetime, the wave equation $\square \phi=0$ fails the strong Huygens principle for odd spacetime dimension ($n$ even) and when $n=1,2$. 
In even spacetime dimensions, the strong Huygens principle is satisfied (see \cite{tjoa2021harvesting, Causality2015Eduardo} for some explicit computations).

Remarkably, some generic results about the violation of the strong Huygens principle can be obtained for curved spacetimes \cite{Faraoni2019huygens, Sonego1992huygenscurved}. 
Some intuition can be gleaned from the Klein-Gordon equation in \eqref{eq: KGE}: if we consider a generic nonminimal coupling parameter $\xi>0$, then massless fields coupled to gravity acquire effective mass due to the Ricci scalar contribution. 
Indeed, for constant-curvature geometry, this effectively gives a mass parameter $m_{\text{eff}} \coloneqq \sqrt{\xi R}$, so the Klein-Gordon equation is equivalent to that of a massive field minimally coupled to gravity ($\xi=0$). 
The importance of the strong Huygens principle is that it affects the propagation of information: indeed, massless fields that violate the strong Huygens principle can signal to the \textit{interior} of the light cone as the Green's function between two timelike-separated events is nonvanishing. 
The violation of the strong Huygens principle has been shown, using perturbative techniques, to impact how two UDW detectors can be entangled when they are in causal contact \cite{tjoa2021harvesting,Casals2020commBH} and how they can be exploited to send timelike signals using massless fields \cite{Blasco2015Huygens,Blasco2016broadcast}.

\subsection{Regularity conditions}
\label{sec: regularity}

Before we proceed with the analysis of the bipartite entanglement dynamics, we first take a detour to establish some regularity requirements for the RSB model for $N$ emitters. 
This will justify in part the use of our model instead of the usual interaction picture, and also explain why it will be useful to work with a covariant IR-cutoff $m \ll 1$ (in an appropriate unit) for dealing with the massless case. 
Readers interested only in the physical results pertaining to the entanglement dynamics of the two detectors can skip to Sec.~\ref{sec: entanglement-two}. In order to keep the discussion fairly accessible, we will try to frame our results in rigorous terms but with a fairly physicist-angled explanation.

The first assumption we will make is the following:
\begin{assumption}[pure-dephasing and no zero mode]
    \label{assume: pure-dephasing}
    The RSB model is chosen to be \textit{pure-dephasing}, i.e., with energy gap $\Omega_j=0$, such that the interaction Hamiltonian \eqref{eq:RSB interaction Hamiltonian} commutes with the free qubit Hamiltonians $\hat H_0^{(j)}= \Delta_j \hat{\sigma}_x^{(j)}$, i.e.,
    \begin{align}
        [\hat{H}\ts{int}^{(i)}, \hat H_0^{(j)}] = 0\,,\qquad \forall i,j\in \{\AAA,\BB,\CC,\ldots\}\,.
    \end{align}
    Furthermore, we will assume that the background spacetime is such that the Hilbert space representation admits no zero modes.
\end{assumption}
By zero mode we mean the field mode that is not associated with the 1-particle sector of the Fock space, i.e., it behaves like a free particle in quantum mechanics \cite{Tjoa2019zeromode,EMM2014zeromode,KayWald1991theorems,folacci1987desitter} and hence does not admit a Fock representation. This complicates the choice of the vacuum state on which the full Hilbert space can be built (see, e.g., \cite{Tjoa2019zeromode,EMM2014zeromode} for simple examples on Einstein cylinder), so we will avoid them for simplicity as they can be dealt only case-by-case.

Although it is not useful for studying weakly coupling phenomenon such as the Unruh effect, the pure-dephasing regime in Assumption~\ref{assume: pure-dephasing} is important as they can still exhibit rich dynamics and they enable rigorous results. More technically, the pure-dephasing model can be viewed as a $C^*$-dynamical system, which roughly speaking means that the dynamics can be implemented by an automorphism $\alpha^{0}_t:\A \to \A$ where $\A$ is the \textit{joint algebra of observables} of the detector-field system, independently of the Hilbert space representations. This is to be contrasted with the nonzero gap case, where the corresponding automorphism can only be rigorously defined as a \textit{pseudodynamics} \cite{bratteli2002operatorv2}, i.e., one has to work with an implementation of $\alpha_t^\Omega: \mathcal{B}(\mathcal{H}_\omega)\to \mathcal{B}(\mathcal{H}_\omega)$ in a particular Hilbert space representation. Morally speaking, this is due to the fact that for $\Omega\neq 0$, there is no guarantee that $\alpha_t^\Omega$ will not ``bring us outside the Hilbert space'' and give rise to divergences. The best one can do is to do a ``small $\Omega$'' perturbative calculation about the pure-dephasing as the perturbation is bounded \cite{spohn1989spinboson}. Treating the interaction Hamiltonian as a perturbation to the free Hamiltonian is much harder due to unbounded nature of the field operator (see, e.g., \cite{hasler2011ground,hasler2021existence}).

Apart from the issue of mathematical rigor, the main physical reason for choosing Assumption~\ref{assume: pure-dephasing} is for the sake of having rigorous results even for the strong coupling regime where the coupling constants $\lambda_j$ for each detector can be chosen to be arbitrarily large. 
In the original UDW case, it was shown that if the interaction Hamiltonian of the UDW model \eqref{eq: UDW} has spacelike supports, then the reduced density matrix of the two detectors after interaction is separable \cite{Simidzija2018no-go}. The key, of course, is spacelike separation as we know that generic interacting quantum systems that are in causal contact will generate some entanglement, and the same gapless model was shown to generate entanglement when they are in causal contact \cite{perche2024closedform}.

As we will show in Sec.~\ref{sec: entanglement-two}, one of our contributions will be to show that not only does the RSB model satisfy the no-go theorem, but that the detectors can only get entangled \textit{very deep into the light cone}. 
In other words, even though we can in principle make the final state arbitrarily close to the maximally entangled state, it takes much longer than the light-crossing time between them. 
To set up the stage for these results, we need more assumptions along the lines used for the single detector case in \cite{Tjoa2024interacting}. 
These additional assumptions require us to first study the ground state properties of the full detector-field system.

The first result is to guarantee that the ground state energy is bounded from below, assuming the existence of a ground state from the GNS construction.\footnote{We are following the construction in  \cite{spohn1989spinboson}: there, one assumes first the existence of a ground state in the form of some vector $\ket{\psi}\in \mathcal{H}_\omega\cong \C^2\otimes \mathfrak{F}(\mathcal{H})$, and then find the ground state energy. The inadequacy of this construction comes from the issue of whether $\mathfrak{F}(\mathcal{H})$ is the \textit{noninteracting Fock space}, in which case one has to construct the state by, e.g., taking zero-temperature limit of some thermal state. From the calculational point of view, it will turn out that the ground state energy will be the same either way as we can just view $\ket{\psi}$ as a GNS vector.} 
\begin{proposition}
    \label{prop: ground-state energy}
    Consider $N$ detectors, $\mathrm{A}, \mathrm{B}, \ldots$, described by the RSB Hamiltonian \eqref{eq: SB-UDW} in $(n+1)$-dimensional spacetime. 
    Then, for gapless detectors $\Omega_j = 0$, the expectation value of the Hamiltonian  $\braket{\hat H_{0,\Delta, \lambda}}$ has a lower bound given by
    \begin{align}
        \braket{\hat H_{0, \Delta, \lambda}}
        &\geq 
        \bm s\cdot \bm \Delta 
        - 
        R_1(\bm s \cdot \bm{F_k},n)\,,
    \end{align}
    for some choice of $\bm s \in \{ \pm \}^N$, where $\bm \Delta = [ \Delta\ts{A}, \Delta\ts{B}, \ldots ]^\intercal$ and $\bm{F_k}=[ F_\bk^{(\mathrm{A})}, F_\bk^{(\mathrm{B})},\ldots ]^\intercal$ are the $N$-dimensional vectors dependent on $\Delta_j$ and the coupling functions, respectively, and 
    \begin{align}
        R_\alpha(f_\bk,n) 
        &\coloneqq 
            \int_{\R^n} \frac{\dd^n\bk}{\omega^\alpha_\bk} |f_\bk|^2\,,
        \quad 
            \alpha \in \R_0^+\,.
    \end{align}
    Hence, the RSB Hamiltonian is bounded from below if $R_1(\bm s \cdot \bm{F_k},n) < \infty$.
\end{proposition}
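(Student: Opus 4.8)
The plan is to use the pure-dephasing structure of Assumption~\ref{assume: pure-dephasing} to block-diagonalize the Hamiltonian and reduce the problem, on each block, to a linearly driven free field (a van Hove--type model), which is then diagonalized by completing the square mode by mode. Since $\Omega_j=0$, the commuting self-adjoint operators $\hat\sigma_x^{(\AAA)},\hat\sigma_x^{(\BB)},\ldots$ also commute with $\hat H_{0,\Delta,\lambda}$, so the detector Hilbert space $(\C^2)^{\otimes N}$ splits into the $2^N$ joint eigenspaces of the $\hat\sigma_x^{(j)}$ labelled by $\bm s=(s_\AAA,s_\BB,\ldots)\in\{\pm\}^N$. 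On the sector $\bm s$ the Hamiltonian restricts to
\begin{align}
    \hat H_{\bm s}
    =
        \bm s\cdot\bm\Delta
        +
        \int_{\R^n}\dd^n\bk\,
        \left[
            \omega_\bk\,\hat a_\bk^\dagger \hat a_\bk
            +
            \hat a_\bk\,(\bm s\cdot\bm{F_k})^*
            +
            \hat a_\bk^\dagger\,(\bm s\cdot\bm{F_k})
        \right] ,
\end{align}
i.e.\ the free field coupled to the classical source $\bm s\cdot\bm{F_k}$.

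Next I would complete the square in each mode: setting $G_\bk\coloneqq\bm s\cdot\bm{F_k}$ and $\hat b_\bk\coloneqq\hat a_\bk+G_\bk/\omega_\bk$ (which preserves the CCR since $G_\bk/\omega_\bk$ is a c-number function of $\bk$), the integrand equals $\omega_\bk\,\hat b_\bk^\dagger\hat b_\bk-|G_\bk|^2/\omega_\bk$, so that
\begin{align}
    \hat H_{\bm s}
    =
        \left(\bm s\cdot\bm\Delta-R_1(\bm s\cdot\bm{F_k},n)\right)\openone
        +
        \int_{\R^n}\dd^n\bk\,\omega_\bk\,\hat b_\bk^\dagger\hat b_\bk\,.
\end{align}
The last term is a non-negative quadratic form: on a vector $\ket{\psi}$ in the form domain of the free field Hamiltonian, $\braket{\psi|\int\dd^n\bk\,\omega_\bk\,\hat b_\bk^\dagger\hat b_\bk|\psi}=\int_{\R^n}\dd^n\bk\,\omega_\bk\,\|\hat b_\bk\ket{\psi}\|^2\ge 0$, and this is finite precisely because $\int\dd^n\bk\,\omega_\bk\,|G_\bk/\omega_\bk|^2=R_1(\bm s\cdot\bm{F_k},n)<\infty$ places the source in the form domain of the one-particle operator. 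Hence $\braket{\hat H_{\bm s}}\ge\bm s\cdot\bm\Delta-R_1(\bm s\cdot\bm{F_k},n)$ on each sector; since $\hat H_{0,\Delta,\lambda}$ is block diagonal over the mutually orthogonal $\bm s$-sectors, an arbitrary state (in particular any ground state obtained from the GNS construction) obeys $\braket{\hat H_{0,\Delta,\lambda}}\ge\min_{\bm s\in\{\pm\}^N}\left[\bm s\cdot\bm\Delta-R_1(\bm s\cdot\bm{F_k},n)\right]$, which is the asserted bound for $\bm s$ the minimizer. When $R_1(\bm s\cdot\bm{F_k},n)<\infty$ the right-hand side is finite, so (passing to the form closure and, if one wishes, the Friedrichs extension) the Hamiltonian is bounded below.

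The genuinely routine parts are the mode-by-mode square completion and the block decomposition; the only point demanding care is the functional-analytic bookkeeping, namely that the displacement $\hat a_\bk\mapsto\hat a_\bk+G_\bk/\omega_\bk$ and the resulting quadratic form are well-defined on a dense domain, and this is exactly the step that forces the hypothesis $R_1(\bm s\cdot\bm{F_k},n)<\infty$: it says $G_\bk/\sqrt{\omega_\bk}\in L^2(\R^n)$, i.e.\ the source lies in the form domain of $\omega_\bk$, which is what is needed even though $G_\bk/\omega_\bk$ itself need not be square integrable. That stronger condition, $R_2(\bm s\cdot\bm{F_k},n)<\infty$, is not required for boundedness below and will instead be relevant only later, for the existence of a ground state \emph{inside} the Fock space.
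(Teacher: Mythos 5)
Your proof is correct and follows essentially the same route as the paper's: block-diagonalization over the joint $\hat\sigma_x$-eigensectors labelled by $\bm s$, followed by a mode-by-mode completion of the square (the paper phrases this as a displacement/Bogoliubov transformation $\hat c_{\bk,\bm s}=\hat a_\bk+\bm s\cdot\bm{F_k}/\omega_\bk$), yielding the shifted number operator plus the constant $\bm s\cdot\bm\Delta-R_1(\bm s\cdot\bm{F_k},n)$. Your explicit remark that the displaced quadratic form is non-negative on arbitrary states, and that only $R_1<\infty$ (not $R_2<\infty$) is needed for boundedness below, is a slightly more careful rendering of what the paper leaves implicit, but it is not a different argument.
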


\begin{proof}
For the sake of generality, we consider $N$ detectors A, B, C, $\ldots$, coupled to the quantum scalar field $\hat \phi$. 
Let $\ket{\pm_j}$ be the eigenvectors of $\hat \sigma_x^{(j)}$, $j\in \{ \AAA, \BB, \CC,\ldots \}$ with the eigenvalues $\pm 1$, respectively. 
Using the expressions $\hat \sigma_x^{(j)}=\ketbra{+_j}{+_j} - \ketbra{-_j}{-_j}$ and $\openone_j = \ketbra{+_j}{+_j} + \ketbra{-_j}{-_j}$, the RSB Hamiltonian for gapless detectors can be expressed as follows: 
\begin{align}
    \hat H_{0, \Delta, \lambda}
    &=
        \sum_{\bm s \in \{ \pm \}^N}
        \ketbra{\bm s}{\bm s} \otimes \hat H_{\bm s}\,,
        \label{eq:diagonalized RSB Hamiltonian}
\end{align}
where $\bm s \in\{ +, - \}^N$ is a vector composed of signs, and 
\begin{align}
    \hat H_{\bm s}
    &\coloneqq
        \hat H_{\phi,0}
        + \bm s\cdot \bm \Delta \openone_\phi 
        + \hat a( \bm s \cdot \bm F^*)
        + \hat a^\dag( \bm s \cdot \bm F ) \,, \label{eq:Hpq}
\end{align}
where $\bm \Delta \equiv [ \Delta\ts{A}, \Delta\ts{B}, \cdots ]^\intercal $ and $\bm{F_k} \equiv [  F_\bk^{(\AAA)},  F_\bk^{(\BB)} , \ldots ]^\intercal $. 
As an example, consider two detectors A and B (i.e., when $N=2$). 
In this case, we have $\ket{\bm s} \in \{ \ket{+\ts{A},+\ts{B}}, \ket{+\ts{A},-\ts{B}}, \ket{-\ts{A},+\ts{B}}, \ket{-\ts{A},-\ts{B}} \}$, and so the gapless RSB Hamiltonian reads 
\begin{align}
    \hat H_{0, \Delta, \lambda}
    &=
        \sum_{p,q=\pm}
        \ketbra{p\ts{A}}{p\ts{A}} 
        \otimes 
        \ketbra{q\ts{B}}{q\ts{B}} 
        \otimes 
        \hat H_{p,q}\,, \notag 
\end{align}
where 
\begin{align}
    \hat H_{p,q}
    &=
        \hat H_{\phi,0}
        + (p \Delta\ts{A} + q\Delta\ts{B}) \openone_\phi 
        + \hat a( p  F_\bk^{(\text{A})*} + q F_\bk^{(\text{B})*}) \notag \\
        &
        + \hat a^\dag( p F_\bk^{(\text{A})} + q F_\bk^{(\text{B})} )\,. 
        \label{eq:bipartite diagonal Hamiltonians}
\end{align}

The operator $\hat H_{\bm s}$ can be deformed into 
\begin{align}
    &\hat H_{\bm s}\notag\\
    &=
        \int_{\R^n} \!\!\!\!\dd^n \bk 
        \kako{
            \omega_\bk \hat a_\bk^\dag \hat a_\bk
            + \bm s \cdot \bm{F_k}^*\, \hat a_\bk 
            + \bm s \cdot \bm{F_k}\, \hat a_\bk^\dag 
            + \dfrac{ |\bm s \cdot \bm{F_k}|^2 }{\omega_\bk} \openone_\phi
        } \notag \\
        &
         + \bm s \cdot \bm \Delta \openone_\phi
        - \int_{\R^n} \dd^n \bk\,\dfrac{ |\bm s \cdot \bm{F_k}|^2 }{\omega_\bk} \openone_\phi \notag  \\
    &=
        \hat H_{\phi,0,\bm s}
        + 
        [ 
            \bm s \cdot \bm \Delta 
            - 
            R_1(\bm s\cdot \bm{F_k},n) 
        ] 
        \openone_\phi \,,
\end{align}
where we have defined 
\begin{align}
    \hat H_{\phi,0,\bm s}
    &\coloneqq
        \int_{\R^n} \dd^n \bk\, 
        \omega_\bk \hat c_{\bk,\bm s}^\dag \hat c_{\bk,\bm s} \,, \\
    \hat{c}_{\bk,\bm s} 
    &\coloneqq 
        \hat a_\bk 
        + 
        \frac{ \bm s\cdot \bm{F_k} }{\omega_\bk} \openone_\phi \notag \\
    &=
        \hat D^\dag (\bm s\cdot \bm{F_k}/\omega_\bk)
        \hat a_\bk 
        \hat D(\bm s\cdot \bm{F_k}/\omega_\bk)\,.
\end{align}
Here, $\hat D(\alpha_\bk)$ is the displacement operator. 
One can straightforwardly show that the field's free Hamiltonian $\hat H_{\phi,0}$ is unitarily equivalent to $\hat H_{\phi,0,\bm s}$ via a transformation: 
\begin{align}
    \hat H_{\phi,0,\bm s}
    &=
        \hat D^\dag (\alpha_{\bm s})
        \hat H_{\phi,0} 
        \hat D (\alpha_{\bm s})\,, 
    \quad
    \alpha_{\bm s}
    \coloneqq
        \int_{\R^n} \dd^n \bk\,
        \dfrac{\bm s \cdot \bm{F_k}}{\omega_\bk}\,.
\end{align}
This suggests that the ground state of $\hat H_{\phi,0,\bm s}$ is given by the coherent state $\ket{\alpha_{\bm s}}\equiv \hat D^\dag(\alpha_{\bm s}) \ket{0}$, where $\ket{0}$ is the Fock vacuum of the free Hamiltonian $\hat H_{\phi,0}$. 
Therefore, defining $\ket{\psi_0^{(\bm s)}}\coloneqq \ket{\bm s} \otimes \ket{\alpha_{\bm s}}$, $\bm s\in \{ \pm \}^N$, we obtain 
\begin{align}
    \braket{\hat H_{0, \Delta, \lambda}}
    &=
        \bm s\cdot \bm \Delta 
        - 
        R_1(\bm s \cdot \bm{F_k},n) \,.
\end{align}
    
\end{proof}

Essentially, Proposition~\ref{prop: ground-state energy} can be understood as the fact as follows: by performing appropriate Bogoliubov transformation the joint interacting ground state is essentially a tensor product of the eigenstates of $\hat{\sigma}_x$ and some (multimode) coherent state of the field whose coherent amplitude depends on the detector-field coupling. The only subtlety is that this argument assumes that the total number of particles in the ground state is finite.

\begin{proposition}
    The joint interacting ground state has a finite particle number if $R_2(\bm s \cdot \bm{F_k},n)<\infty$ for all $\bm s$. 
\end{proposition}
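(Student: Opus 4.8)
The plan is to use Assumption~\ref{assume: pure-dephasing} to reduce the $N$-emitter Hamiltonian to a family of linearly-driven free fields, exactly as in the proof of Proposition~\ref{prop: ground-state energy}, and then simply read off the particle content of the resulting coherent ground state. First, since the $\hat{\sigma}_x^{(j)}$ mutually commute and --- by pure dephasing --- also commute with $\hat H_{\phi,0}$ and with every $\hat H\ts{int}^{(i)}$, they can be simultaneously diagonalized; this decomposes $\C^{2^N}\otimes\fock$ into $2^N$ sectors labeled by the joint eigenvalue $\bm s=(s_1,\dots,s_N)\in\{\pm\}^N$ of $(\hat{\sigma}_x^{(\AAA)},\hat{\sigma}_x^{(\BB)},\dots)$. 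On the sector $\bm s$ the Hamiltonian acts as
\[
    \hat H_{\bm s} = \bm s\cdot\bm\Delta + \int_{\R^n}\dd^n\bk\,\Big[\omega_\bk\hat a_\bk^\dagger\hat a_\bk + (\bm s\cdot\bm{F_k})^*\hat a_\bk + (\bm s\cdot\bm{F_k})\hat a_\bk^\dagger\Big].
\]
Completing the square with the shifted modes $\hat b_\bk = \hat a_\bk + (\bm s\cdot\bm{F_k})/\omega_\bk$ turns $\hat H_{\bm s}$ into $\bm s\cdot\bm\Delta - R_1(\bm s\cdot\bm{F_k},n) + \int\dd^n\bk\,\omega_\bk\hat b_\bk^\dagger\hat b_\bk$ (this is exactly Proposition~\ref{prop: ground-state energy}); its ground state in that sector is the $\hat b$-vacuum, i.e. the multimode coherent state $\ket{\alpha^{(\bm s)}} = \hat D(\alpha^{(\bm s)})\ket 0$ with amplitude $\alpha^{(\bm s)}_\bk := -(\bm s\cdot\bm{F_k})/\omega_\bk$, tensored with the relevant $\hat{\sigma}_x$ eigenvector. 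The full interacting ground state then lives in whichever sector(s) $\bm s^*$ minimize $\bm s\cdot\bm\Delta - R_1(\bm s\cdot\bm{F_k},n)$.

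Next I would count particles directly. Since the detector tensor factor carries no field operators, the expected number of quanta of the non-interacting field in the ground state is
\[
    \Braket{\hat N} = \int_{\R^n}\dd^n\bk\,\big|\alpha^{(\bm s^*)}_\bk\big|^2 = \int_{\R^n}\frac{\dd^n\bk}{\omega_\bk^2}\,\big|\bm s^*\cdot\bm{F_k}\big|^2 = R_2(\bm s^*\cdot\bm{F_k},n),
\]
which is finite by the hypothesis $R_2(\bm s\cdot\bm{F_k},n)<\infty$. Imposing this for all $\bm s$, rather than only for the energy-minimizing ones, also covers the degenerate case in which the ground space is spanned by vectors from several sectors.

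The one point that needs care --- and the reason the criterion is phrased with $R_2$ rather than $R_1$ --- is ensuring that the formal square-completion above stays inside the non-interacting Fock space $\fock$. The displacement $\hat D(\alpha^{(\bm s)})$ is unitarily implementable on $\fock$, and $\ket{\alpha^{(\bm s)}}$ is a genuine normalizable Fock vector, if and only if $\alpha^{(\bm s)}\in L^2(\R^n,\dd^n\bk)$, i.e. $R_2(\bm s\cdot\bm{F_k},n)<\infty$ --- the coherent-state special case of the standard implementability criterion for Bogoliubov transformations. When instead $R_1<\infty$ but $R_2=\infty$ --- the familiar infrared situation for a massless field in $(3{+}1)$ dimensions (e.g. with Gaussian spatial smearing) --- the ground-state \emph{energy} is still bounded below, but the ground-state \emph{vector} carries infinitely many soft quanta and leaves $\fock$ altogether (a van Hove--type infrared catastrophe, precisely the case the covariant mass cutoff $m\ll 1$ is meant to regulate). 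So I expect the main work to be just stating this Fock-implementability criterion cleanly and noting that it coincides with $R_2<\infty$; everything else is the displaced-oscillator computation already used for Proposition~\ref{prop: ground-state energy}.
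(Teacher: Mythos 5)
Your proposal is correct and follows essentially the same route as the paper, which itself only sketches the argument by adapting Spohn's displaced-oscillator construction: sector decomposition over $\bm s\in\{\pm\}^N$, coherent ground state with amplitude $-(\bm s\cdot\bm{F_k})/\omega_\bk$, and expected particle number $\int\dd^n\bk\,|\bm s\cdot\bm{F_k}|^2/\omega_\bk^2=R_2(\bm s\cdot\bm{F_k},n)$. Your added remark that $R_2<\infty$ is precisely the $L^2$/Fock-implementability criterion, and that $R_1<\infty$ with $R_2=\infty$ gives the soft-boson infrared catastrophe, matches the paper's own commentary on why the condition is imposed for all $\bm s$.
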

\noindent The proof is a simple adaptation of \cite{spohn1989spinboson} except that we need to consider all choices of signs $\bm s$ and that the integral $R_2$ is with respect to the ``combined smearing function'' $\bm s\cdot \bm{F_k}$. 
In particular, we see that the model exhibits IR divergences (production of an infinite number of soft bosons in the thermodynamic limit) if at least one detector is not coupled to the field with a sufficiently IR-regular coupling function $F^{(j)}_\bk$.

As argued in \cite{Tjoa2024interacting}, this suggests that modeling a physical system with the RSB or UDW framework requires careful control of both UV and IR behavior. The UV regularity is automatically taken care of by giving each detector a finite size (effectively serving as a UV cutoff), but the IR case is more subtle. If the field is massive or gains effective mass through, say, coupling to the (classical) gravitational field, then the model is also IR-regular. In contrast, when there are soft bosons, such as (3+1)-dimensional Minkowski spacetime with massless field, the key insight to be learned from \cite{spohn1989spinboson,fannes1988equilibrium,amann1991spinboson,hasler2011ground,hasler2021existence} is to consider a different Hilbert space representation that is not unitarily equivalent to the vacuum representation: indeed, the Araki-Woods representation and the thermal KMS representation suffice for the situation at hand \cite{araki1963representations,morfa2012deformations} (see \cite{passegger2024disjoint} for very recent analysis on KMS states). At the level of correlation functions, the vacuum correlators can be computed by taking the zero-temperature limit of the thermal correlators.

Since in this work we are going to focus on entanglement dynamics, the situation involving the infinitely many soft boson productions will not be of concern to us. Indeed, perturbatively one can check (following the strategy in \cite{tjoa2021harvesting}) that very light scalar field ($m\approx 0$) leads to bipartite entanglement between the detectors that is indistinguishable compared to the massless case. Thus we are going to assume this:
\begin{assumption}
    The scalar field is assumed to have mass $m>0$. 
    Massless fields will be viewed as having a covariant IR cutoff given by a small but finite mass $m$. 
\end{assumption}
\noindent {In the context of our results, we will consider the field to be massless for parameter $m\sigma\ll 10^{-11}$ where $\sigma$ is the effective size of the detector.} Crucially, this implies that the RSB model is both UV- and IR-regular in the sense above, it follows that the noninteracting Hilbert space is unitarily equivalent to the joint interacting Hilbert space, i.e., the interacting Hilbert space is $\C^2\otimes \C^2\otimes \mathfrak{F}(\mathcal{H})$ where $\mathfrak{F}(\mathcal{H})$ is the Fock space of the free field theory. In the subsequent analyses we will make these assumptions for simplicity, noting that one can always pass to the correct Hilbert space representation when the IR regularity needs to be dropped \cite{Tjoa2024interacting,fannes1988equilibrium,spohn1989spinboson}.

\subsection{General expression of the reduced density matrix for an $N$-partite system}
In this section, we derive the joint density matrix for $N$ gapless detectors in $(n+1)$-dimensional spacetime. 
Suppose the initial state for the total system is 
\begin{align}
    \ket{\psi(0)}
    &=
        \underbrace{\ket{g\ts{A}} \ket{g\ts{B}} \ldots }_N
        \otimes
        \ket{0} \notag \\
    &=
        \sum_{\bm s \in \{ \pm \}^N}
        \dfrac{1}{2^{N/2}} 
        \ket{\bm s } \otimes \ket{0}\,, \label{eq:N qubit initial state}
\end{align}
where $\ket{g_j}$ is the ground state of $\hat \sigma_z^{(j)}$, and $\ket{0}$ is the Fock vacuum of the field's free Hamiltonian $\hat H_{\phi,0}$. 
The second equality is derived by using $\ket{g_j}=( \ket{+_j}+\ket{-_j} )/\sqrt{2}$.

The total system of $N$ gapless qubits and the field evolves under a unitary operator generated by the gapless RSB Hamiltonian: $\hat U(t)=e^{-\ii t\hat H_{0,\Delta,\lambda}}$. 
As we have seen in Eq.~\eqref{eq:diagonalized RSB Hamiltonian}, the gapless RSB Hamiltonian takes the diagonalized form in the basis $\{ \ket{\bm s}\,|\, \bm s \in \{ \pm \}^N \}$. 
Therefore, the unitary operator $\hat U(t)$ can also be diagonalized in this basis, which can be expressed as 
\begin{align}
    \hat U(t)
    &=
        \sum_{\bm s \in \{ \pm \}^N}
        \ketbra{\bm s}{\bm s}
        \otimes e^{-\ii t \hat H_{\bm s}}\,,
\end{align}
and the total state $\ket{\psi(t)}$ at time $t$ can be compactly written as 
\begin{align}
    \ket{\psi(t)}
    &=
        \hat U(t) \ket{\psi(0)} \notag \\
    &=
        \dfrac{1}{2^{N/2}}
        \sum_{\bm s \in \{ \pm \}^N} 
        \ket{\bm s }  
        \otimes e^{ -\ii t \hat H_{\bm s} } \ket{0}\,.
\end{align}
The reduced density matrix for the $N$ gapless qubits thus reads 
\begin{align}
    \rho(t)
    &=
        \tr_\phi[ \ketbra{\psi(t)}{\psi(t)} ] \notag \\
    &=
        \dfrac{1}{2^N}
        \sum_{\bm s, \bm r \in \{ \pm \}^N} 
        \braket{0|e^{ \ii t \hat H_{\bm s} } e^{ -\ii t \hat H_{\bm r} }|0} \ket{\bm r}\bra{\bm s}\,.
\end{align}

In the case of the bipartite qubit system ($N=2$), the RSB Hamiltonian in the basis $\{ \ket{++}, \ket{+-}, \ket{-+}, \ket{--} \}$ is expressed as 
\begin{align}
    \hat H_{0,\Delta, \lambda}
    &=
        \begin{bmatrix}
            \hat H_{++} & 0 & 0 & 0 \\
            0 & \hat H_{+-} & 0 & 0 \\
            0 & 0 & \hat H_{-+} &0 \\
            0 & 0 & 0 & \hat H_{--}
        \end{bmatrix}\,, \notag 
\end{align}
where $H_{p,q}$, $p,q \in \{ \pm \}$ is given by \eqref{eq:bipartite diagonal Hamiltonians}, and the unitary operator generated by this Hamiltonian is thus 
\begin{align}
    \hat U(t)
    =
        \begin{bmatrix}
            e^{-\ii t \hat H_{++}} & 0 & 0 & 0 \\
            0 & e^{-\ii t \hat H_{+-}} & 0 & 0 \\
            0 & 0 & e^{-\ii t \hat H_{-+}} &0 \\
            0 & 0 & 0 & e^{-\ii t \hat H_{--}}
        \end{bmatrix}\,.\notag 
\end{align}

To obtain the reduced density matrix $\rho(t)$, we need to evaluate the vacuum expectation value $\braket{0|e^{ \ii t \hat H_{\bm s} } e^{ -\ii t \hat H_{\bm r} }|0}$. 
To do this, we utilize the following:  
\begin{align}
    e^{ \ii t \hat H_{\bm s} } e^{ -\ii t \hat H_{\bm r} }
    &=
        e^{ \ii t (\bm s - \bm r) \cdot \bm \Delta }
        e^{ \ii [\theta_{\bm s}(t) - \theta_{\bm r}(t) ] } 
        e^{\ii \Im ( \eta_\bk(t) \bm s \cdot \bm{F_k}, \eta_\bk(t) \bm r \cdot \bm{F_k} )_\hil} \notag \\
    &\times 
    W((\bm s - \bm r)\cdot \bm{F_k} \eta_\bk (t) e^{ \ii \omega_\bk t })\,, 
    \label{eq:product of Hamiltonians}
\end{align}
where 
\begin{align}
    \theta_{\bm s}(t)
    &\coloneqq
        \int_{\R^n} \dd^n \bk\,
        \dfrac{ |\bm s\cdot \bm{F_k}|^2 }{\omega_\bk^2} (\sin(\omega_\bk t) - \omega_\bk t)\,,\\
    (f_\bk, g_\bk)_\hil
    &\coloneqq
        \int_{\R^n} \dd^n \bk\,
        f_\bk^* g_\bk \,, \\
    \eta_\bk(t) 
    &\coloneqq
        \dfrac{2 e^{-\ii \omega_\bk t/2} }{ \omega_\bk }
        \sin(\omega_\bk t/2)\,.
\end{align}
This can be derived straightforwardly from Lemma~1 in \cite{Tjoa2024interacting}, which employs a Lie algebraic technique (see Sec.~7.4 in \cite{gilmore2008lie}) to decompose $e^{\ii t \hat H_{\bm s}}$ into products of exponentials: 
\begin{align}
    e^{\ii t \hat H_{\bm s}}
    &=
        e^{ \ii t \bm s\cdot \bm \Delta }
        e^{\ii \theta_{\bm s}(t)}
        e^{\ii t \hat H_{\phi,0}} 
        e^{ \ii \hat \phi( \eta_\bk (t) \bm s \cdot \bm{F_k} ) }\,.
\end{align}
Realizing that $e^{ \ii \hat \phi( \eta_\bk (t) \bm s \cdot \bm{F_k} ) }$ is an element of the Weyl algebra $W(\eta_\bk (t) \bm s \cdot \bm{F_k})\equiv e^{ \ii \hat \phi( \eta_\bk (t) \bm s \cdot \bm{F_k} ) }$, the Weyl relation \cite{Tjoa2024interacting} $W(f_\bk) W(g_\bk)=e^{-\ii \Im(f_\bk, g_\bk)_\hil} W(f_\bk + g_\bk)$ leads to Eq.~\eqref{eq:product of Hamiltonians}. 
From Eq.~\eqref{eq:product of Hamiltonians}, we have the reduced density matrix for the $N$ gapless qubits: 
\begin{align}
    \rho(t)
    &=
        \dfrac{1}{2^N}
        \sum_{\bm s, \bm r \in \{ \pm \}^N} 
        \ketbra{\bm r}{\bm s}
        e^{ \ii t (\bm s - \bm r) \cdot \bm \Delta }
        e^{ \ii [\theta_{\bm s}(t) - \theta_{\bm r}(t) ] } \notag \\
        &\times 
        e^{\ii \Im ( \eta_\bk(t) \bm s \cdot \bm{F_k}, \eta_\bk(t) \bm r \cdot \bm{F_k} )_\hil} 
        \omega_0(
            W(
                (\bm s - \bm r)\cdot \bm{F_k} \eta_\bk^* (t) 
            )
        ), 
    \label{eq:final density matrix general}
\end{align}
where $\omega_0(W(f_\bk))\equiv \braket{0|e^{\ii \hat \phi(f_\bk)}|0}$ with $\omega_0: \mathcal W(\mfd) \to \C$ being the state corresponding to the Fock vacuum $\ket{0}$. 
Practically, this can be evaluated in the following way: 
\begin{align}
    \omega_0(W(f_\bk))
    &=
        \exp 
        \kako{
            -\dfrac{1}{2} 
            \int_{\R^n} \dd^n \bk\,|f_\bk|^2
        }.
\end{align}
Needless to say, a concrete expression of $\rho(t)$ requires an explicit form of $F_\bk^{(j)}$. In the following subsections, we obtain the expression of each element in $\rho(t)$ for the bipartite and tripartite cases when the background spacetime geometry is flat.

\section{Entanglement dynamics between two detectors}
\label{sec: entanglement-two}

In this section, we revisit the scenario of entanglement dynamics between two UDW detectors interacting with a scalar field initialized in the uncorrelated ground state of the noninteracting theory ($\lambda_j=0$) $\ket{\psi(0)} = \ket{g\ts{A}}\ket{g\ts{B}}\ket{0}$. 
It is instructive to survey the relevant known results for two original UDW detectors, all of which were done in the interaction picture:
\begin{enumerate}[leftmargin=*,label=(\roman*)]
    \item Perturbatively, spacelike-separated detectors with an interaction Hamiltonian not commuting with their free Hamiltonian can be entangled starting from an initially separable state. 
    This is the well-known entanglement harvesting protocol \cite{pozas2015harvesting,pozas2016entanglement,reznik2003entanglement,reznik2005violating,Valentini1991nonlocalcorr}, and it has been studied extensively in various scenarios (curved spacetimes, different detector models, etc.). 

    \item Perturbatively, detectors can also be entangled when the interaction is causally connected \cite{pozas2015harvesting}; however, this entanglement can be independent of the vacuum entanglement of the field state, depending on a wave property known as the strong Huygens principle \cite{tjoa2021harvesting, hector2022optimize, Caribe2023harvestingBH}. 

    \item Nonperturbatively, it is known that \textit{delta-coupled} detectors cannot get entangled at spacelike separation if it is ``simple-generated'' \cite{Simidzija2018no-go} (i.e., each detector interacts instantaneously \textit{once} \cite{tjoa2023nonperturbative}), but can get entangled otherwise.  
    
    \item Nonperturbatively, gapless detector models cannot extract entanglement at spacelike separation, based on arguments about entanglement-breaking channels \cite{Simidzija2018no-go,pozas2017degenerate}. 
    Consequently, initially separable detectors remain separable after interaction. 
    
    \item Nonperturbatively, it was shown in \cite{perche2024closedform} for massless fields in (3+1)D Minkowski spacetime that gapless models can get entangled if the interaction time is long enough for the detectors to be in causal contact.\footnote{It was also shown in \cite{perche2024closedform} 
    that if the coupling is also sufficiently weak, the two-detector dynamics can be approximated in some sense by an entangling unitary. The latter is essentially the same spirit as the well-known semiclassical optics calculation where the electromagnetic field is not quantized \cite{scully1997quantum,vedral2005modern}.} 
    This analysis was partially motivated to better understand gravity-induced entanglement setups \cite{emm2023gravityinducedentanglement}.

\end{enumerate}
There are countless results in the literature involving similar setups but different detector trajectories, boundary conditions, spacetime geometries, initial states, etc., that are only tangentially relevant for our purposes, so we do not aim for an exhaustive list. 

Our initial goal is to complement the analysis in the literature concerning (ii), (iv) and (v). That is, we carry out a fully nonperturbative analysis of entanglement dynamics between two gapless detectors that are in causal contact in scenarios where the strong Huygens principle is violated, i.e., accounting for the effects of mass and spacetime dimensions. As we show below, pursuing this goal has a good payoff: it turns out that the entanglement dynamics is actually richer and nontrivial once we include mass and spacetime dimension as tunable parameters.

\begin{figure*}[tp]
\centering
\includegraphics[width=\linewidth]{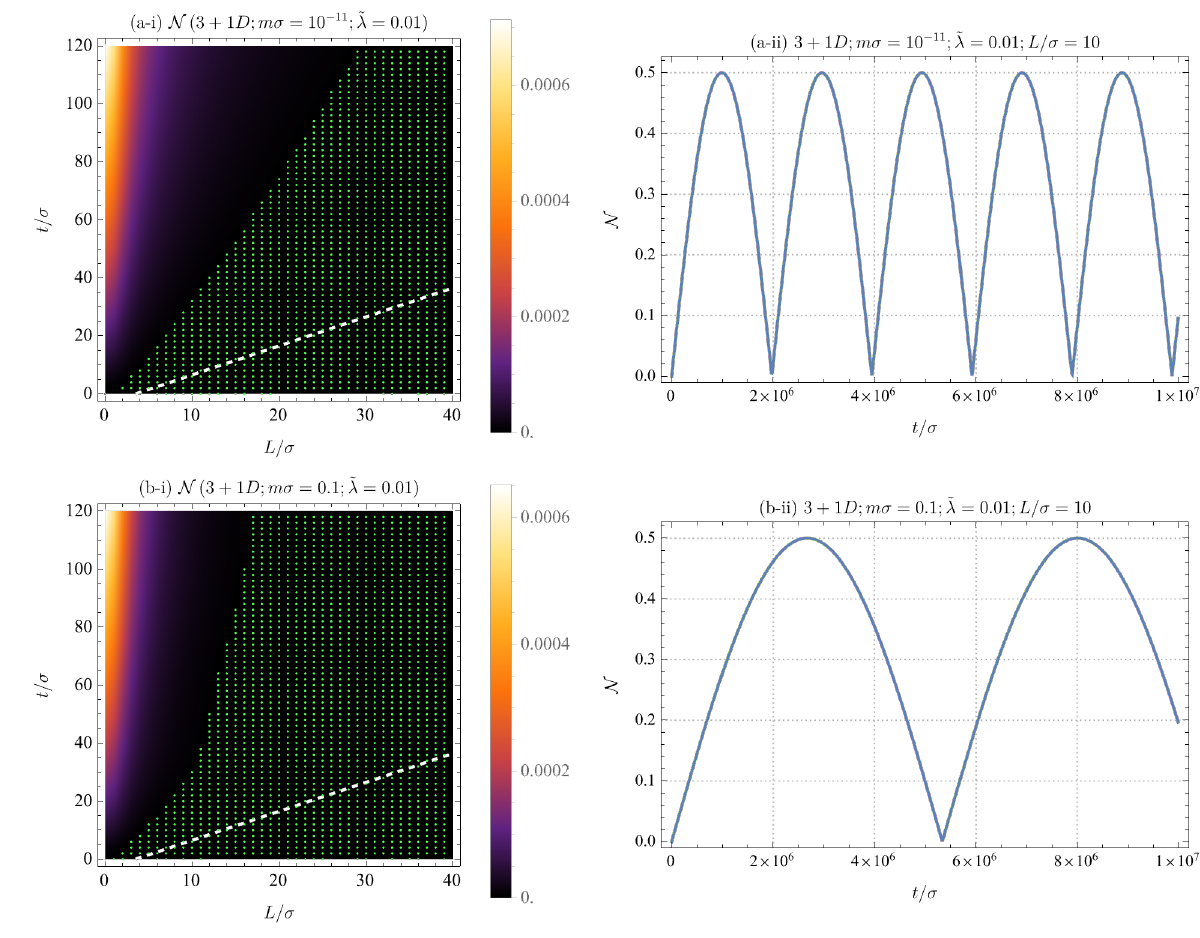}
    \caption{Density plots [(a-i) and (b-i)] of negativity $\mathcal{N}$ between two inertial detectors weakly coupled to the field ($\tilde \lambda =0.01$) with $\Delta\ts{A}=\Delta\ts{B}=0$ in $(3+1)$-dimensional Minkowski spacetime when the mass of the field is (a) $m \sigma=10^{-11}$ and (b) $m \sigma=0.1$. 
    The green dots depict the region where $\mathcal{N}=0$, and the white dashed line represents the light cone from the edge of the detector (which starts from $L/\sigma\approx 3.5$ at $t=0$), whose COM is located at the origin of the diagrams. 
    On the right-hand side [(a-ii) and (b-ii)], the time dependence of $\mathcal{N}$ at $L/\sigma=10$ is depicted. 
}
\label{fig:4Ddensityweak}
\end{figure*}

We first recall that the original interest in the so-called entanglement harvesting protocol is fundamental in nature, since it was to see whether vacuum entanglement can be detected at all \cite{pozas2015harvesting,pozas2016entanglement, Valentini1991nonlocalcorr, reznik2003entanglement, reznik2005violating}. For this reason, it is immaterial that the amount of entanglement is extremely small or whether the calculations are only perturbative in nature, as the main purpose is as a sort of ``witness'' for vacuum entanglement in the field. From a theoretical standpoint, there is nothing very surprising about the protocol itself: what is nontrivial is whether it is detectable in practice. Several experimental proposals have been given, from the early proposal involving a linear ion trap \cite{retzker2005iontrap} to the recent superconducting circuits \cite{teixido2025towards}, with a recent experimental realization via an electro-optics sampling experiment \cite{lindel2024harvestingexp}.

On the other hand, if the goal is to generate entanglement, the quality matters, and it is somewhat immaterial that the entanglement is extracted from the vacuum of the field (which would be very small in general). 
Indeed, physical intuition tells us that it is generically unsurprising for two quantum systems A and B to get entangled in general if we allow them to have direct interactions, and one has to engineer the interactions and the initial states to avoid getting entangled. 
The same is true if we allow induced interactions, i.e., the two systems interact with a common third party C that induces effective interactions between the two. Essentially, C mediates the interaction by propagating signals from A to B (it could be phonons, photons, etc.). 
Consequently, the physically relevant question is how the entanglement is generated and how good it can be as a function of parameters. 
This necessitates a nonperturbative approach, as perturbation theory only allows a perturbatively small amount of entanglement.

With the above perspectives in mind, we would like to better understand how entanglement is generated between two initially uncorrelated detectors interacting with the vacuum state of a noninteracting scalar field as a function of time, separation, mass, and spacetime dimensions. 
The gapless RSB model has two advantages over the standard UDW model in that (i) it can be defined rigorously \cite{Tjoa2024interacting,spohn1989spinboson,amann1991spinboson,fannes1988equilibrium} and (ii) the causal relationship can be defined sharply without any (exponential) tails from the original UDW model employing Gaussian (adiabatic) switching. 
Our results will be presented numerically in what follows. 
All results are plotted with the detector's effective size $\sigma$ as a reference length scale, and we use the dimensionless coupling constant $\tilde{\lambda}\equiv \lambda \sigma^{-(n-3)/2}$. 
The initial state of the total system at $t=0$ is an uncorrelated state $\ket{\psi(0)} = \ket{g\ts{A}}\ket{g\ts{B}}\ket{0}$, where $\ket{g_j}$ is the ground state of $\hat \sigma_z^{(j)}$. 
Fields are considered massless when $m\sigma = 10^{-11}$ (IR-regulated) with numerically verified convergence. 
We note that a nonzero gap $(\Delta_j \neq 0)$ does not alter the dynamics at all, so these results apply equally well to generic $\Delta_j$. 
Hence, in what follows we set $\Delta_j  = 0$.

\begin{figure*}[tp]
\centering
\includegraphics[width=\linewidth]{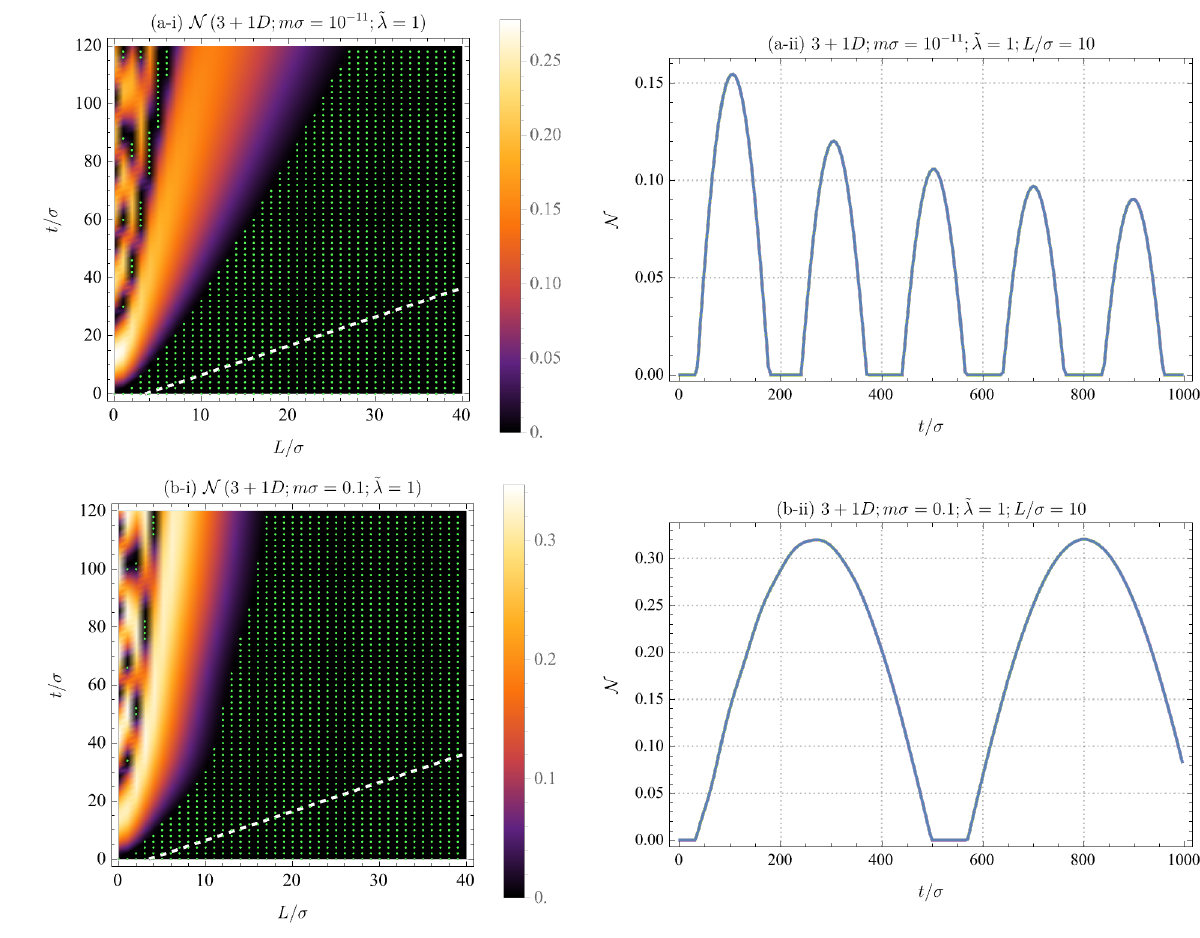}
    \caption{Density plots [(a-i) and (b-i)] of negativity $\mathcal{N}$ between two inertial detectors strongly coupled ($\tilde \lambda =1$) to the field in $(3+1)$-dimensional Minkowski spacetime when the mass of the field is (a) $m \sigma=10^{-11}$ and (b) $m \sigma=0.1$. 
    The green dots depict the region where $\mathcal{N}=0$, and the white dashed line represents the light cone from the edge of the detector (which starts from $L/\sigma\approx 3.5$ at $t=0$), whose COM is located at the origin of the diagrams. 
    On the right-hand side [(a-ii) and (b-ii)], the time dependence of $\mathcal{N}$ at $L/\sigma=10$ is depicted. 
}
\label{fig:4DdensityStrong}
\end{figure*}

In Fig.~\ref{fig:4Ddensityweak}, we plot the negativity of entanglement $\mathcal N$ as a function of separation $L$ between the detectors' centers of mass and the duration of interaction $t$ for the weak coupling regime $\tilde{\lambda}=0.01$, with (a) showing the massless regime and (b) showing the massive regime. 
In both massless and massive cases, we see that the UDW-type coupling can be used to generate bipartite states close to being maximally entangled.\footnote{For the massless case, the fact that we can generate a significant amount of entanglement can already be inferred from \cite{perche2024closedform} with suitable adjustment and rescaling (to account, for instance, the Gaussian switching).} 
However, perhaps surprisingly, both Fig.~\ref{fig:4Ddensityweak}(a-i) and (b-i) show that the light cone (white dashed line) has a much gentler slope than the black boundary which marks the beginning of nonzero entanglement---for convenience, we call these ``entanglement cones.'' 
In other words, a significant amount of entanglement can only be generated \textit{very deep into the light cone}. 
This is to be contrasted with the standard perturbative result for the UDW model, where entanglement peaks in general \textit{at} the light cone \cite{tjoa2021harvesting}. 
The entanglement cone is not linear in the $t$-$x$ plane, and we see that increasing the mass makes the entanglement cone steeper (i.e., it takes a much longer time to reach maximum entanglement).

\begin{figure*}[tp]
\centering
\includegraphics[width=\linewidth]{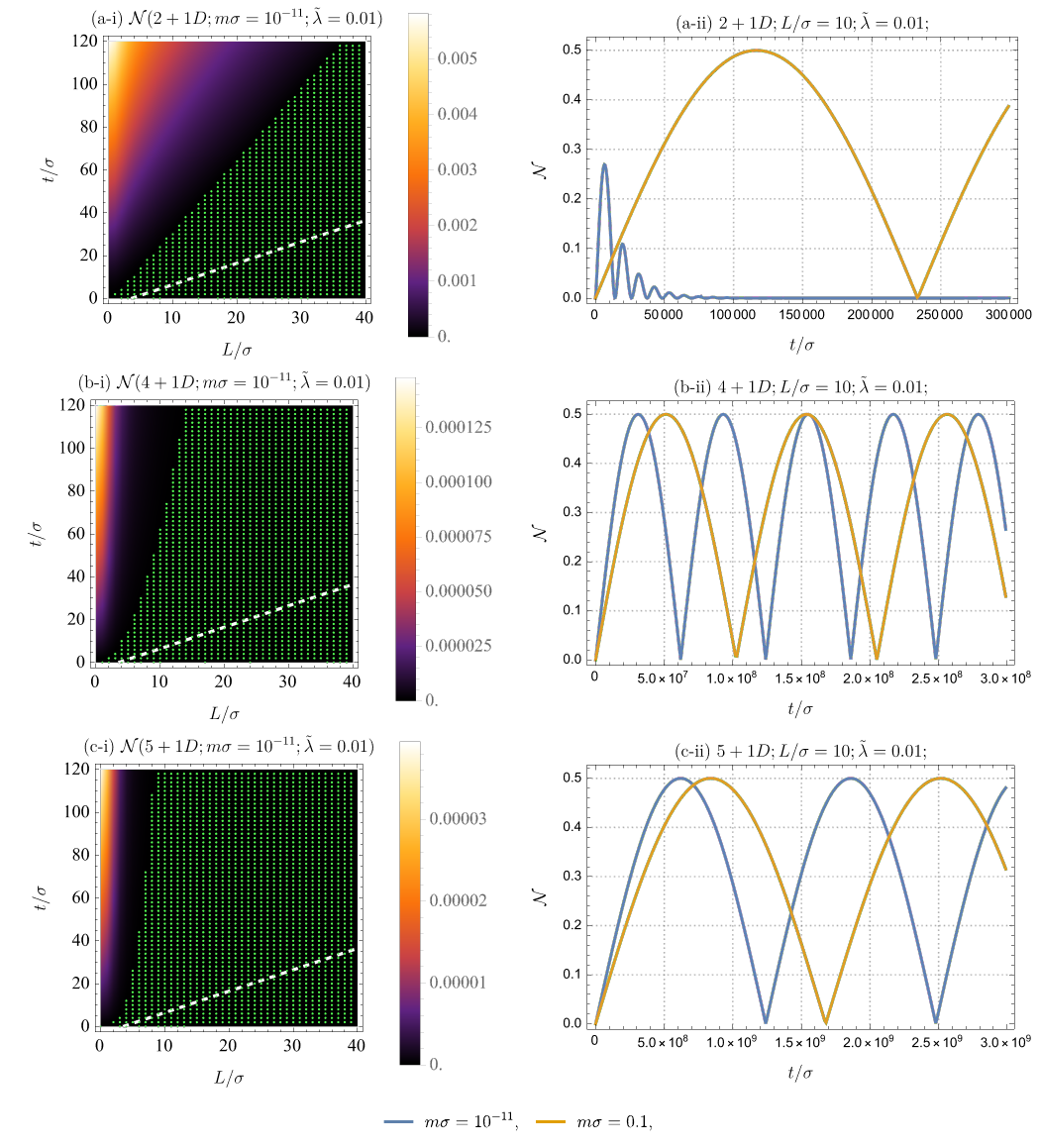}
    \caption{
    Density plots [(a-i), (b-i), and (c-i)] of negativity $\mathcal N$ between two inertial detectors weakly coupled ($\tilde \lambda=0.01$) to the nearly massless field in various spacetime dimensions. 
    Here, $\tilde \lambda\equiv \lambda \sigma^{-(n-3)/2}$ for $(n+1)$-dimensional Minkowski spacetime. 
    On the right panel [(a-ii), (b-ii), and (c-ii)], the time dependence of $\mathcal N$ at $L/\sigma=10$ with $m \sigma=10^{-11}$ (blue plots) and $m\sigma=0.1$ (orange plots) are depicted. 
    }
\label{fig:FigOtherDims}
\end{figure*}

In Fig.~\ref{fig:4DdensityStrong}, we plot the same scenario in (3+1)D for the strong coupling regime $\tilde \lambda =1$. 
In the strong coupling regime, we observe that the entanglement cone is \textit{unchanged} compared to the weak coupling; however, the maximum amount of entanglement is much lower in both massless and massive cases. 
In fact, we observe that not only does the massless entanglement decay in time, but also its maximum is \textit{lower} than that for the massive field. 
In other words, mass \textit{enhances} the amount of entanglement generated for the two detectors. 
Note that because the RSB model used here is gapless, this cannot be explained by any resonance-type argument (which works in the perturbative regime as in \cite{hector2022optimize}). 
Importantly, this does not contradict any intuition about ``exponential decay'' associated with mass: mass contributes to the exponential decay of correlations \textit{as a function of separation $L$}, but not as a function of temporal duration $t$---this can be checked, for instance, by looking at the asymptotic behavior of the Wightman two-point functions in the large-$t$ or large-$L$ limit.

\begin{figure*}[tp]
\centering
\includegraphics[width=\linewidth]{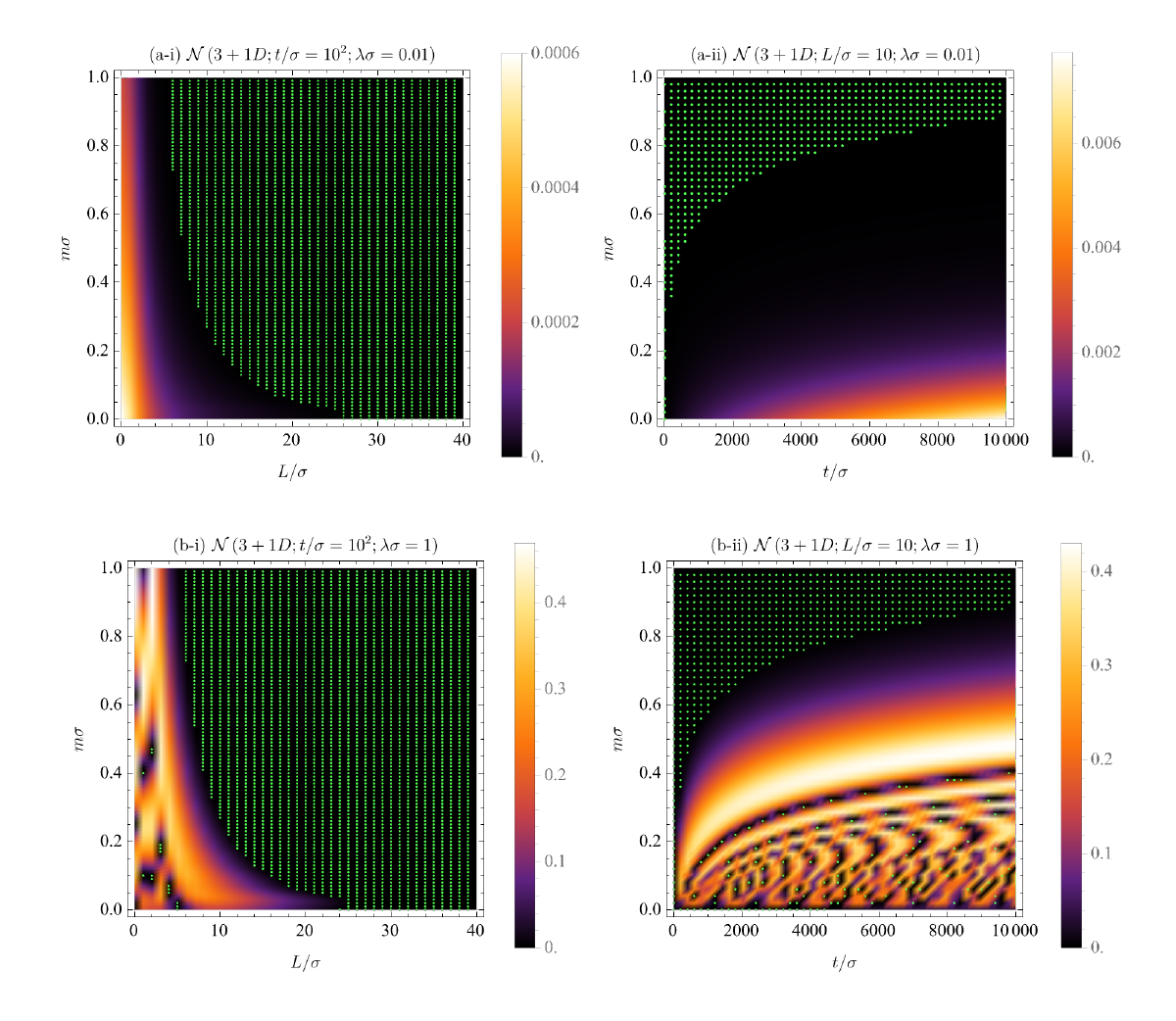}
    \caption{
    Density plots of negativity $\mathcal N$ between two inertial detectors in $(3+1)$-dimensional Minkowski spacetime, showing how mass $m$ of the field contributes to $\mathcal N$. 
    (a) and (b) correspond to the weak coupling $\tilde \lambda=0.01$ and strong coupling $\tilde \lambda =1$ cases. 
    (a-i) and (b-i) depict the dependence on $m\sigma$ and $L/\sigma$ at a fixed time $t/\sigma=10^2$, whereas (a-ii) and (b-ii) show the dependence on $m\sigma$ and $t/\sigma$ at a fixed distance $L/\sigma=10$. 
    }
\label{fig:4DdensityVsMass}
\end{figure*}

At this stage, it is unclear if this phenomenon is tied to the strong Huygens principle. 
If it were, then the same result should be expected by varying the spacetime dimensions rather than mass. 
We show that this is intrinsically due to mass rather than the violation of the strong Huygens principle in Fig.~\ref{fig:FigOtherDims} for the weak coupling regime. 
Here we can see more clearly in the rightmost Fig.~\ref{fig:FigOtherDims}(a-ii)--(c-ii) that mass generically improves entanglement generation, in that it always produces at least as much entanglement as the massless case, as is the case in (4+1)D and (5+1)D. 
In (2+1)D, the separation between the massless and massive cases is huge: we see that the massless field can generate only half the negativity of the massive case. 
What is true in general, however, is that massive fields attain maximum entanglement \textit{much more slowly} than the massless field. 
We note in passing that the entanglement cone is steeper as the spacetime dimension is increased---this is itself not surprising, as this can be explained by the fact that the Wightman two-point functions have stronger decay (in both space and time directions) in higher dimensions. 
It implies that generating entanglement for a given separation is harder in higher spacetime dimensions.

Now that we have singled out mass as the factor that contributes to the enhancement of entanglement between the two detectors, it is useful to see how the entanglement pattern changes with mass when we fix either the spatial separation or the duration of interaction, as shown in Fig.~\ref{fig:4DdensityVsMass}. 
Here we see that the properties of spatial and temporal correlations of the field are indeed asymmetric and influence the negativity differently.\footnote{Note that the asymmetry in the way correlations decay in the spatial and temporal directions is not unique to relativistic scalar fields: it is a well-known phenomenon in quantum many-body and condensed matter physics, which underlies why quantum many-body dynamics is a hard problem even for systems with exponentially decaying {spatial} correlations.} 
Notice that in Figs.~(a-i) and (b-i), we understand in what sense massless fields are more helpful: the larger the mass, the faster the negativity vanishes as a function of $L$, hence massless fields have ``larger reach'' in spatial separation as negativity vanishes at larger separation. 
Figures \ref{fig:4DdensityVsMass}(a-ii) and \ref{fig:4DdensityVsMass}(b-ii) complement our earlier observations: mass generically improves the maximum amount of entanglement generated on the detectors, in this case for the strong-coupling regime relative to the weak-coupling one.

We summarize the main results as follows:
\begin{enumerate}[label=(\roman*),leftmargin=*]
    \item Nonperturbatively, gapless models can indeed generate entanglement between the two detectors, but they generically do so \textit{very deep into the light cone}. 
    This is in contrast to the standard perturbative result where entanglement is generated starting at the light cone (in other words, light cone and entanglement cone coincide) \cite{tjoa2021harvesting}. 
    This is a stronger result than the no-go theorem in \cite{Simidzija2018no-go}, which only proves that entanglement cannot be generated at spacelike-separation, since we showed that their interactions being in causal contact is not enough. 
    
    \item Mass can improve entanglement generated relative to the massless case in terms of the amount of negativity, sometimes approaching maximal entanglement $\mathcal{N}\approx 0.5$ when the massless field cannot. 
    This is independent of the violation of the strong Huygens property. 
    However, in general, massive fields only achieve maximum negativity at a much longer interaction time, consistent with the fact that the entanglement cone is steeper with increasing mass. 

    \item There is a clear asymmetry between the decay of correlations in the spatial and temporal directions, which manifests in the difference in how the negativity of the two detectors falls off as a function of time and spatial separation for different mass parameters. 
\end{enumerate}

\section{Entanglement dynamics between three detectors}
\label{sec: entanglement-three}

In this section, we study the entanglement dynamics between three gapless RSB-type UDW detectors interacting with a scalar field initialized in the uncorrelated ground state  of the noninteracting theory ($\lambda_j=0$) given by 
\begin{align}
    \ket{\psi(0)} = \ket{g_{\ts{A}}}\ket{g_{\ts{B}}}\ket{g_{\ts{C}}}\ket{0}\,.
\end{align}
For simplicity, the three detectors are assumed to be static and placed symmetrically at the vertices of an equilateral triangle, each separated by a distance $L$ from one another \cite{mendez2022tripartite}. 
This simplification is also helpful when trying to isolate the physics associated with the detector-field dynamics from the nonuniformity introduced by the asymmetric placement of the detectors.

Due to the relatively painful amount of computations, tripartite dynamics in the standard UDW model is already sparse. 
The relevant known results so far are
\begin{enumerate}[label=(\roman*),leftmargin=*]
    \item Perturbatively, it was shown in Minkowski spacetime that three UDW detectors can extract tripartite entanglement at spacelike separation in various spatial configurations, not only the equilateral one \cite{mendez2022tripartite}. 
    This is also demonstrated for (2+1)D BTZ black hole spacetime \cite{membrere2023tripartite}. 
    The authors used $\pi$-tangle \cite{ou2007monogamytripartite} that serves as a lower bound for the amount of genuine tripartite entanglement in mixed states of three qubits. 
    It is noteworthy that $\pi$-tangle is not an entanglement measure on its own. There were some earlier studies that show W-type entanglement in entanglement harvesting-type setups (i.e., for spacelike-separated detectors) \cite{silman2005distil}.

    \item Nonperturbatively, it was shown for Minkowski  \cite{mendez2023tripartite}  that three UDW detectors can get entangled for \textit{delta-coupled} detector model, also using $\pi$-tangle as a witness of genuine tripartite entanglement. 
    They show that there exist parameter regimes for which the tripartite entanglement is GHZ-type.
\end{enumerate}
As before, we do not aim for an exhaustive list. 
From these, we believe that the long-time regime involving the gapless model has not been fully understood prior to this work. 
Let us remark that the regularity requirements for the three-detector RSB model follow directly from the general case in Sec.~\ref{sec: regularity}.

It is well known that constructing entanglement measures for multipartite entanglement is highly nontrivial. 
Even in the case of three qubits, it is known that it is not possible to have a single entanglement measure that unambiguously tells whether one state is more entangled than another \cite{acin2000schmidt}. 
This is related to the fact that for multipartite entanglement one can have inequivalent but \textit{genuine multipartite entanglement} (GME): the simplest examples being the GHZ and W states, which are in some sense ``maximally entangled'' (see, e.g., \cite{eberly2021triangle}). 
For our purposes, this issue is complicated by the fact that if we consider the RSB model with three detectors, then the reduced density matrix of the detectors is mixed: for mixed states, constructing a faithful entanglement measure is generically difficult, and none of the convex-roof extensions (if at all) are easy to compute for mixed states.

In order to analyze the tripartite entanglement, we follow \cite{mendez2022tripartite, mendez2023tripartite, membrere2023tripartite} and consider the \textit{$\pi$-tangle}, defined as 
\begin{subequations}
\begin{align}
    \pi
    &\coloneqq
        \dfrac{\pi\ts{A} + \pi\ts{B} + \pi\ts{C}}{3}\,, \\
    \pi\ts{A}
    &\coloneqq
        \mathcal{N}\ts{A(BC)}^2 - \mathcal{N}\ts{A(B)}^2 - \mathcal{N}\ts{A(C)}^2\,, \\
    \pi\ts{B}
    &\coloneqq
        \mathcal{N}\ts{B(CA)}^2 
        - \mathcal{N}\ts{B(C)}^2 
        - \mathcal{N}\ts{B(A)}^2\,, \\
    \pi\ts{C}
    &\coloneqq
        \mathcal{N}\ts{C(AB)}^2 
        - \mathcal{N}\ts{C(A)}^2 
        - \mathcal{N}\ts{C(B)}^2\,.
\end{align}
\end{subequations}
Here, 
\begin{subequations}
\begin{align}
    \mathcal{N}\ts{A(BC)}
    &\coloneqq
        \frac{|| \rho\ts{ABC}^{\intercal\ts{A}} ||_1 -1}{2} \,, \\
    \mathcal{N}\ts{A(B)}
    &\coloneqq
        \frac{|| \rho\ts{AB}^{\intercal\ts{A}} ||_1 -1}{2} \,, \\
    \mathcal{N}\ts{A(C)}
    &\coloneqq
        \frac{|| \rho\ts{AC}^{\intercal\ts{A}} ||_1 -1}{2} \,,
\end{align}
\end{subequations}
are the negativities of the three-detector system, where $||\cdot||_1$ is the trace norm. The $\pi$-tangle quantifies the tripartite entanglement if the joint state of the detectors $\rho\ts{ABC}$ is pure due to the Coffman-Kundu-Wootters inequality for negativity, $\pi_i \geq 0$, $\forall i\in \{ \AAA, \BB, \CC \}$ \cite{CKW.distributed.2000, ou2007monogamytripartite}. Unfortunately, this property fails when $\rho\ts{ABC}$ is mixed, and the $\pi$-tangle serves only as a lower bound for the tripartite entanglement. This suggests that the $\pi$-tangle for mixed states could be zero or negative, even if there exists tripartite entanglement among the detectors. Nevertheless, tripartite entanglement is guaranteed at least when the $\pi$-tangle is positive. For this reason, in what follows we take $\max\{ \pi, 0 \}$ and examine the parameter space with positive $\pi$-tangle.

This time, it will be useful to first summarize our results for the tripartite entanglement dynamics involving the gapless RSB model:
\begin{enumerate}[label=(\roman*),leftmargin=*]
    \item There exists a regime in the parameter space for which the three detectors have \textit{genuine tripartite entanglement} of the ``GHZ type,'' in the sense that any bipartition has zero entanglement. 
    However, this regime is generically very narrow and has a very small $\pi$-tangle, thus it is difficult to certify that the tripartite entanglement is significant.

    \item A large portion of the parameter space allows for tripartite entanglement such that any bipartition has nonzero entanglement. 
    Similar to the bipartite case, the mass of the field tends to amplify the $\pi$-tangle, but there is no natural way to discern the nature of the tripartite entanglement.
\end{enumerate}
 We will provide arguments for why this is the case and put the existing known results in perspective.

\begin{figure*}[t]
\centering
\includegraphics[width=\linewidth]{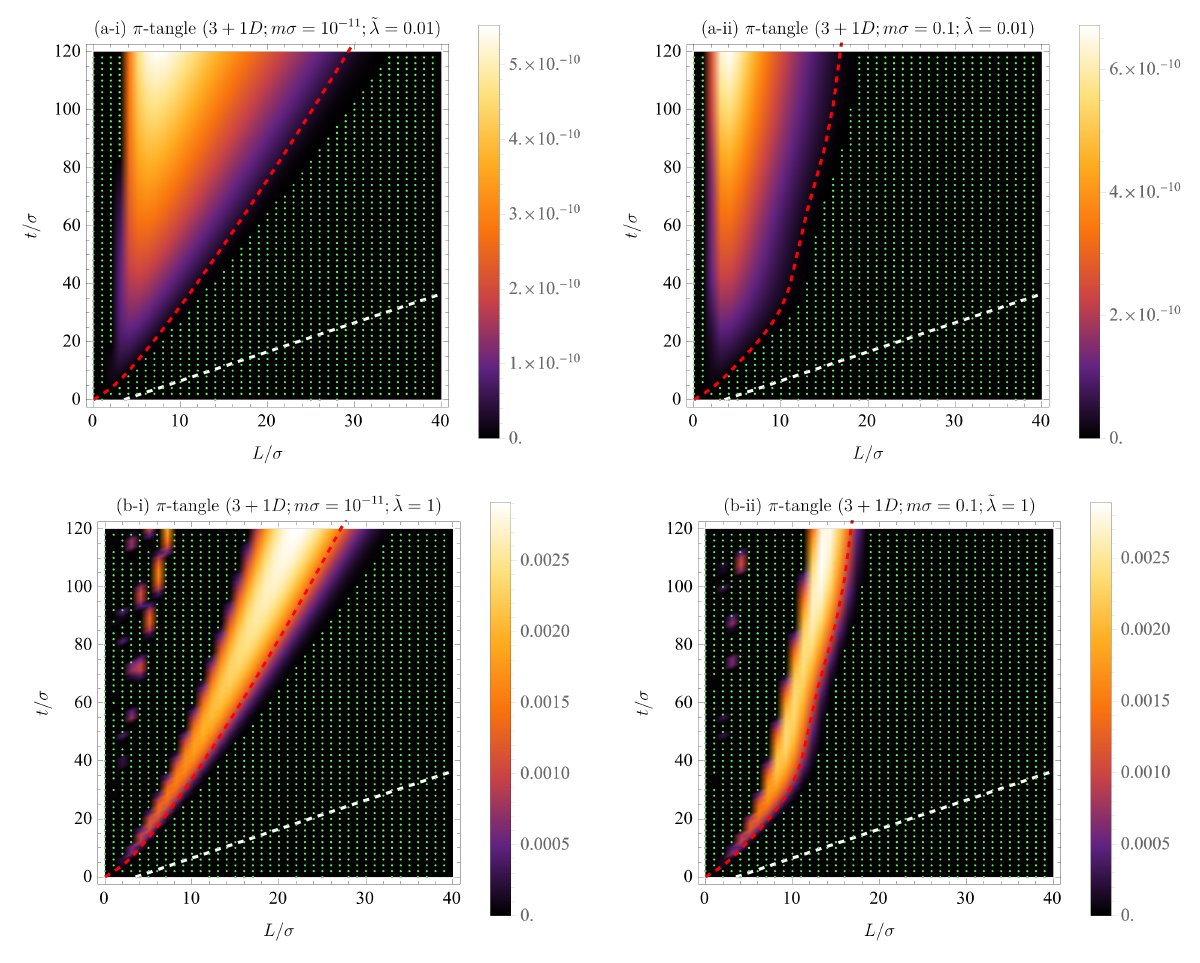}
    \caption{
    Density plots of the positive $\pi$-tangle among three inertial detectors in the equilateral triangular configuration in $(3+1)$-dimensional Minkowski spacetime when the detectors are (a) weakly coupled, and (b) strongly coupled to the field. 
    (a-i) and (b-i) show the nearly massless ($m \sigma =10^{-11}$) scenario, whereas (a-ii) and (b-ii) correspond to the massive case with $m \sigma=0.1$. 
    The green dots depict the region where the $\pi$-tangle vanishes. 
    The white dashed line represents the light cone from the edge of the detector as before, while the red dashed curve corresponds to the boundary of the bipartite entanglement cone depicted in Figs.~\ref{fig:4Ddensityweak} and \ref{fig:4DdensityStrong}. 
    }
\label{fig:pitangle3DVer2}
\end{figure*}

Our argument is based on a numerical analysis in Fig.~\ref{fig:pitangle3DVer2} for both weak and strong coupling as well as massless and massive fields. 
The white dashed line represents the light cone. 
What we show here is that on the one hand, the black region below the red line in Fig.~\ref{fig:pitangle3DVer2} corresponds to the case for which the three detectors have \textit{genuine tripartite entanglement} of the ``GHZ type,'' in the sense that any bipartition has zero entanglement. 
However, this region is generically very narrow, thus one has to fine-tune the arrangement of the detectors to obtain this. 
Furthermore, since the  $\pi$-tangle is very small, it is difficult to certify that this genuine tripartite entanglement is actually significant. 
On the other hand, the area bounded by the green dots above the red line shows the regime that allows for tripartite entanglement with nonzero bipartite entanglement. 
We also observe that the mass of the field tends to amplify the $\pi$-tangle; however, there is no natural way to obtain precise information about the nature of the tripartite entanglement. 
Indeed, we checked that none of the usual tripartite entanglement witnesses for mixed states works \cite{Sanpera2001tripartite}, as we verified them numerically and all of them are inconclusive (i.e., they say nothing about the tripartite entanglement content). 
These results seem to suggest that there are significant limitations in using gapless UDW or RSB-type models to learn about both intrinsic multipartite entanglement and multipartite entanglement generation in QFT.

A natural question to ask is how this result is compatible with the observations in \cite{mendez2023tripartite}, which essentially claim that there is genuine tripartite entanglement of GHZ type in Minkowski spacetime and BTZ spacetime for the vacuum state of a massless scalar field with a delta-coupled model. 
Strictly speaking, while there are some similarities between the delta-coupled model and the gapless model \cite{tjoa2023nonperturbative}, the results in \cite{mendez2023tripartite} are not ideal as they showed genuine tripartite entanglement only for very small detector separation, effectively with significant overlap of their Gaussian smearing functions. 
In this regime, one can question the validity of the standard UDW model. 
This is because very closely spaced atoms may experience interatomic interactions that the UDW model we considered here does not capture. 
In this sense, our results are more reliable as they do not rely on significant overlap of the spatial smearing and still display genuine tripartite entanglement of GHZ type.

\section{Discussion and outlook}

In this work, we have studied nonperturbatively the entanglement generation between two emitters in an exactly solvable relativistic variant of the spin-boson model, equivalent to the time-independent formulation of the (gapless) Unruh-DeWitt model. 
This formulation allows for a rigorous basis of the regularity of the model in the UV and in the IR and allows a slightly cleaner analysis of the causal behavior of the interactions.

We showed that (i) (highly) entangled states of the two emitters (detectors) require interactions \textit{very deep into the light cone}, strengthening the no-go result of \cite{Simidzija2018no-go} that says there can be no entanglement for spacelike-separated interactions; (ii) the mass of the field generically improves the maximum amount of entanglement generated on the detectors at the expense of longer interaction times, and furthermore, we argued that this is not fundamentally due to the strong Huygens principle; (iii) while it is possible to find regimes with genuine GHZ-like tripartite entanglement, this regime turns out to be really small. 
Furthermore,  it is difficult to find regimes where tripartite entanglement can be easily shown to be significant or can be easily classified into different inequivalent types, refining the delta-coupling result in \cite{mendez2023tripartite}. 
Result (iii), in particular, suggests that probing the multipartite entanglement of a relativistic quantum field nonperturbatively requires either different probe-based techniques or variants of the UDW model.

In a way, our results suggest that despite the nonperturbative and simple nature of the model, the gapless variant of the UDW model is not good enough once we are interested in {obtaining more precise information about multipartite entanglement, whether to study the intrinsic multipartite entanglement in the field or the ability for relativistic quantum fields to mediate entanglement generation}. 
From the quantum-optical standpoint, it means that this model is not very useful for generating highly entangled multipartite states \textit{even if we allow causal contact}, and since this model is unable to extract entanglement from the vacuum state at spacelike separation, we can think of the gapless model as being suitable, for instance, for studying classical communication settings (see, e.g., \cite{landulfo2024broadcast}). 
That said, our work closes the (remaining) gap for the two-detector scenario: we showed that causal contact enables entanglement generation in a maximal sense, but this requires the interaction time to be much larger than the spatial separation of the detectors, in contrast to the known perturbative results \cite{tjoa2021harvesting}.

Three natural questions arise from our work. 
First, is it possible to consider a variant of the setup or the model in such a way that we can probe nonperturbatively {multipartite entanglement} of the vacuum state of the field in a useful way? 
Presumably, there are operator-algebraic approaches to studying multipartite entanglement \cite{vanLuijk2024multipartite} of the QFT directly, and this should ideally be complementary. 
Second, can multiple delta-coupled detectors be used more effectively to study entanglement in QFT? The first use case was in \cite{Simidzija2018no-go}, and recently multidelta coupling has been used in single-detector settings \cite{uenaga2025exact,pologomez2024delta}. 
These quickly become unwieldy for a large number of couplings and detectors, but in principle computable. 
{Last but not least, one way to resolve the ambiguity regarding the nature and ``strength'' of the tripartite entanglement is by operational means, i.e., by identifying protocols (e.g., teleportation in or steering in the bipartite case) that require certain kinds and amounts of tripartite entanglement to succeed and see if the tripartite state produced via the RSB interaction works for these protocols}. 
We leave these questions for the future.

\section*{Acknowledgments}

E.T. acknowledges financial support from the Alexander von Humboldt Foundation. K.G.Y. is partially supported by Grant-in-Aid for Research Activity Start-up (Grant No. JP24K22862) and by Grant-in-Aid for JSPS Fellows (Grant No. JP25KJ0048).

\section*{Data availability}
The data that support the findings of this paper are not
publicly available. 
The data are available from the authors upon reasonable request.

\appendix

\section{Joint density matrix for gapless detectors}

\subsection{Two gapless qubits in Minkowski spacetime}
As a concrete example, we restrict ourselves to $(n+1)$-dimensional Minkowski spacetime and consider two static detectors with zero relative velocity. 
For the sake of convenience, we choose the following Gaussian spatial smearing profile: 
\begin{align}
    F^{(j)}(\bx)
    &=
        \dfrac{1}{ (\sqrt{\pi} \sigma)^n }
        e^{ -(\bx - \bx_{j0})^2/\sigma^2}\,,
    \quad
    j\in \{ \AAA, \BB \}\,.
\end{align}
Here, $\sigma>0$ is the characteristic Gaussian width and $\bx_{j0}$ is the center of the smearing function. 
This smearing function is Fourier transformed into 
\begin{align}
    F_\bk^{(j)}
    &\coloneqq
        \int_{\Sigma_0} \dd^n x
        \sqrt{h_0} \lambda_j F^{(j)}(\bx) 
        u_\bk^*(0,\bx) \notag \\
    &=
        \int_{\R^n} \dfrac{ \dd^n x }{ \sqrt{ 2\omega_\bk (2\pi)^n } }
        \lambda_j F^{(j)}(\bx) e^{ \ii \bk \cdot \bx } \notag \\
    &=
        \dfrac{ \lambda_j e^{ -\kk^2 \sigma^2/4 } e^{\ii \bk \cdot \bx_{j,0}} }{ \sqrt{ 2\omega_\bk (2\pi)^n } }\,.
\end{align}

Define the distance $L$ between the detectors as $L\coloneqq |\bx\ts{A,0} - \bx\ts{B,0}|$. 
From Eq.~\eqref{eq:final density matrix general}, the reduced density matrix $\rho\ts{AB}(t)$ in the basis $\{ \ket{++}, \ket{+-}, \ket{-+}, \ket{--} \}$ reads 
\begin{align}
    \rho\ts{AB}(t)
    &=
        \dfrac{1}{4}
        \begin{bmatrix}
            1 & \rho_{12} & \rho_{13} & \rho_{14} \\
            \rho_{12}^* & 1 & \rho_{23} & \rho_{24} \\
            \rho_{13}^* & \rho_{23}^* & 1 & \rho_{34} \\
            \rho_{14}^* & \rho_{24}^* & \rho_{34}^* & 1
        \end{bmatrix}\,,
\end{align}
with 
\begin{subequations}
    \begin{align}
    \rho_{12}
    &=
        e^{ -2 \ii t \Delta\ts{B}  }
        e^{ -\ii \vartheta\ts{AB} }
        \omega_0 ( W( \Fk{B} \eta_\bk^*(t) ) )\,, \\
    \rho_{13}
    &=
        e^{ -2\ii t \Delta\ts{A} }
        e^{ -\ii \vartheta\ts{AB} }
        \omega_0 ( W( \Fk{A} \eta_\bk^*(t) ) )\,, \\
    \rho_{14}
    &=
        e^{ -2\ii t (\Delta\ts{A} + \Delta\ts{B}) }
        e^{ - \Xi\ts{AB} } \notag \\
        &\quad \times 
        \omega_0 ( W( \Fk{A} \eta_\bk^*(t) ) )
        \omega_0 ( W( \Fk{B} \eta_\bk^*(t) ) )
         \,, \\
    \rho_{23}
    &=
        e^{ -2\ii t (\Delta\ts{A} - \Delta\ts{B}) }
        e^{ \Xi\ts{AB} }\notag \\
        &\quad \times 
        \omega_0 ( W( \Fk{A} \eta_\bk^*(t) ) )
        \omega_0 ( W( \Fk{B} \eta_\bk^*(t) ) )
        \,, \\
    \rho_{24}
    &=
        e^{ -2\ii t \Delta\ts{A} }
        e^{ \ii \vartheta\ts{AB} }
        \omega_0 ( W( \Fk{A} \eta_\bk^*(t) ) ) \,, \\
    \rho_{34}
    &=
        e^{ -2 \ii t \Delta\ts{B} }
        e^{ \ii \vartheta\ts{AB} }
        \omega_0 ( W( \Fk{B} \eta_\bk^*(t) ) )\,,
\end{align}
\end{subequations}
where 
\begin{widetext}
\begin{subequations}
\begin{align}
     \omega_0(W(F_\bk^{(j)} \eta_\bk^*(t)))
    &=
        \exp
        \kako{
            -\dfrac{8\lambda_j^2}{(2\pi)^n}
            \dfrac{\sqrt{ \pi^n }}{ \Gamma(n/2) }
            \int_0^\infty \dd \kk 
            \dfrac{ \kk^{n-1} \sin^2(\omega_\bk t/2) }{ \omega_\bk^3 } e^{-\kk^2 \sigma^2/2}
        }\,,\\
        \vartheta\ts{AB}
    &=
        \dfrac{4 \lambda\ts{A} \lambda\ts{B}}{ (2\pi)^n } \sqrt{\pi^n}
        \int_0^\infty \dd \kk\,
        \dfrac{ \kk^{n-1} e^{-\kk^2 \sigma^2/2} }{ \omega_\bk^3 }
        (\sin(\omega_\bk t) - \omega_\bk t)
        _{0}\mathcal F_1
        \kako{
            \dfrac{n}{2};
            -\dfrac{\kk^2 L^2}{4}
        }
        \,, \\
    \Xi\ts{AB}
    &=
        \dfrac{16\lambda\ts{A} \lambda\ts{B}}{(2\pi)^n} \sqrt{\pi^n}
        \int_0^\infty \dd \kk\,
        \dfrac{ \kk^{n-1} e^{ -\kk^2 \sigma^2/2 } }{\omega_\bk^3}
        \sin^2 (\omega_\bk t/2)
        {}_0 \mathcal F_1 
        \kako{
            \dfrac{n}{2};
            -\dfrac{\kk^2 L^2}{4}
        }
        \,.
\end{align} \label{eq:explicit forms for phases}
\end{subequations}
\end{widetext}
Here, ${}_0 \mathcal F_1$ is the regularized generalized hypergeometric function and $\Gamma(z)$ is the Gamma function \cite{NIST:DLMF,abramowitz1965handbook}. We note that the phase term
\begin{align*}
    e^{\ii \Im ( \eta_\bk(t) \bm s \cdot \bm{F_k}, \eta_\bk(t) \bm r \cdot \bm{F_k} )_\hil}=1
\end{align*}
as $\Im ( \eta_\bk(t) \bm s \cdot \bm{F_k}, \eta_\bk(t) \bm r \cdot \bm{F_k} )_\hil=0$ for a real smearing function $F^{(j)}$.\footnote{This is consistent with the fact that the causal propagator $E(f,g)$ is zero when the supports of $f$ and $g$ are causally disconnected, as $\Im(f_\bk, g_\bk)_\hil$ is the ``$\bk$-space version'' of $E(f,g)$. See \cite{Tjoa2024interacting}.}

\subsection{Three gapless qubits in the equilateral triangular configuration}
We now consider three static gapless detectors A, B, and C, in $(n+1)$-dimensional Minkowski spacetime. 
In this tripartite case, we have the freedom to choose the spatial configuration of the detectors. 
The simplest one is the equilateral triangle configuration \cite{mendez2022tripartite, mendez2023tripartite, membrere2023tripartite}, namely, each detector is located at a corner of an equilateral triangle at each time $t$. 
Such a symmetrical configuration simplifies the calculation drastically, as the proper distance between any pair of the detectors is the same $L$.

From Eq.~\eqref{eq:final density matrix general} with $N=3$, the reduced density matrix is an $8\times 8$ matrix, 
\begin{align}
    \rho(t)
    =
    \dfrac{1}{8}
    \begin{bmatrix}
        1 & \rho_{12} & \rho_{13} & \rho_{14} & \rho_{15} & \rho_{16} & \rho_{17} & \rho_{18} \\
        \rho_{12}^* & 1 & \rho_{23} & \rho_{24} & \rho_{25} & \rho_{26} & \rho_{27} & \rho_{28} \\
        \rho_{13}^* & \rho_{23}^* & 1 & \rho_{34} & \rho_{35} & \rho_{36} & \rho_{37} & \rho_{38} \\
        \rho_{14}^* & \rho_{24}^* & \rho_{34}^* & 1 & \rho_{45} & \rho_{46} & \rho_{47} & \rho_{48} \\
        \rho_{15}^* & \rho_{25}^* & \rho_{35}^* & \rho_{45}^* & 1 & \rho_{56} & \rho_{57} & \rho_{58} \\
        \rho_{16}^* & \rho_{26}^* & \rho_{36}^* & \rho_{46}^* & \rho_{56}^* & 1 & \rho_{67} & \rho_{68} \\
        \rho_{17}^* & \rho_{27}^* & \rho_{37}^* & \rho_{47}^* & \rho_{57}^* & \rho_{67}^* & 1 & \rho_{78} \\
        \rho_{18}^* & \rho_{28}^* & \rho_{38}^* & \rho_{48}^* & \rho_{58}^* & \rho_{68}^* & \rho_{78}^* & 1 \\
    \end{bmatrix},
\end{align}
in the basis $\{ \ket{+++}, \ket{++-}, \ket{+-+}, \ket{-++}$, $\ket{+--}$, $\ket{-+-}$, $\ket{--+}, \ket{---}\}$. 
Each element reads 
\begin{align*}
    \rho_{12} 
    &=
        e^{-2\ii t \Delta\ts{C}}
        e^{ \ii (\vartheta\ts{CA} + \vartheta\ts{BC})}
        \omega_0(W( \Fk{C} \eta_\bk^*(t) ))\,, \\
    \rho_{13}
    &=
        e^{-2\ii t \Delta\ts{B}}
        e^{ \ii (\vartheta\ts{AB} + \vartheta\ts{BC}) }
        \omega_0( W( \Fk{B} \eta_\bk^*(t) ) )\,, \\
    \rho_{14}
    &=
        e^{-2\ii t \Delta\ts{A}}
        e^{ \ii (\vartheta\ts{AB} + \vartheta\ts{CA}) }
        \omega_0( W( \Fk{A} \eta_\bk^*(t) ) )
        \,, \\
    \rho_{15}
    &=
        e^{-2\ii t (\Delta\ts{B} + \Delta\ts{C} )}
        e^{ \ii (\vartheta\ts{AB} + \vartheta\ts{CA}) }
        e^{-\Xi\ts{BC}} \notag \\
        &\times
        \omega_0( W( \Fk{B} \eta_\bk^*(t) ) )
        \omega_0( W( \Fk{C} \eta_\bk^*(t) ) )
        \,, \\
    \rho_{16}
    &=
        e^{-2\ii t (\Delta\ts{A} + \Delta\ts{C} )}
        e^{ \ii (\vartheta\ts{AB} + \vartheta\ts{BC}) }
        e^{-\Xi\ts{CA}} \notag \\
        &\times
        \omega_0( W( \Fk{A} \eta_\bk^*(t) ) )
        \omega_0( W( \Fk{C} \eta_\bk^*(t) ) )
        \,, \\
    \rho_{17}
    &=
        e^{-2\ii t (\Delta\ts{A} + \Delta\ts{B} )}
        e^{ \ii (\vartheta\ts{CA} + \vartheta\ts{BC}) }
        e^{-\Xi\ts{AB}} \notag \\
        &\times
        \omega_0( W( \Fk{A} \eta_\bk^*(t) ) )
        \omega_0( W( \Fk{B} \eta_\bk^*(t) ) )
        \,, \\
    \rho_{18}
    &=
        e^{-2\ii t (\Delta\ts{A} + \Delta\ts{B} + \Delta\ts{C} )}
        e^{-\Xi\ts{AB}}
        e^{-\Xi\ts{BC}}
        e^{-\Xi\ts{CA}} \notag \\
        &\times
        \omega_0( W( \Fk{A} \eta_\bk^*(t) ) )
        \omega_0( W( \Fk{B} \eta_\bk^*(t) ) ) \notag \\
        &\times 
        \omega_0( W( \Fk{C} \eta_\bk^*(t) ) )
        \, \\
    \rho_{23}
    &=
        e^{-2\ii t (\Delta\ts{B} - \Delta\ts{C} )}
        e^{ \ii (\vartheta\ts{AB} - \vartheta\ts{CA}) }
        e^{\Xi\ts{BC}} \notag \\
        &\times
        \omega_0( W( \Fk{B} \eta_\bk^*(t) ) )
        \omega_0( W( \Fk{C} \eta_\bk^*(t) ) )
        \,, \\
    \rho_{24}
    &=
        e^{-2\ii t (\Delta\ts{A} - \Delta\ts{C} )}
        e^{ \ii (\vartheta\ts{AB} - \vartheta\ts{BC}) }
        e^{\Xi\ts{CA}} \notag \\
        &\times
        \omega_0( W( \Fk{C} \eta_\bk^*(t) ) )
        \omega_0( W( \Fk{A} \eta_\bk^*(t) ) )
        \,, \\
    \rho_{25}
    &=
        e^{-2\ii t \Delta\ts{B} }
        e^{ \ii (\vartheta\ts{AB} - \vartheta\ts{BC}) }
        \omega_0( W( \Fk{B} \eta_\bk^*(t) ) )
        \,, \\
    \rho_{26}
    &=
        e^{-2\ii t \Delta\ts{A} }
        e^{ \ii (\vartheta\ts{AB} - \vartheta\ts{CA}) }
        \omega_0( W( \Fk{A} \eta_\bk^*(t) ) )
        \,, \\
    \rho_{27}
    &=
        e^{-2\ii t (\Delta\ts{A} + \Delta\ts{B} - \Delta\ts{C}) }
        e^{-\Xi\ts{AB}} e^{\Xi\ts{BC}} e^{\Xi\ts{CA}} \notag \\
        &\times 
        \omega_0( W( \Fk{A} \eta_\bk^*(t) ) )
        \omega_0( W( \Fk{B} \eta_\bk^*(t) ) ) \notag \\
        &\times 
        \omega_0( W( \Fk{C} \eta_\bk^*(t) ) )
        \,, \\
    \rho_{28}
    &=
        e^{-2\ii t (\Delta\ts{A} + \Delta\ts{B}) }
        e^{ \ii (-\vartheta\ts{CA} - \vartheta\ts{BC}) }
        e^{-\Xi\ts{AB}} \notag \\
        &\times 
        \omega_0( W( \Fk{A} \eta_\bk^*(t) ) )
        \omega_0( W( \Fk{B} \eta_\bk^*(t) ) )
        \,, \\
    \rho_{34}
    &=
        e^{-2\ii t (\Delta\ts{A} - \Delta\ts{B}) }
        e^{ \ii (\vartheta\ts{CA} - \vartheta\ts{BC}) }
        e^{\Xi\ts{AB}} \notag \\
        &\times 
        \omega_0( W( \Fk{A} \eta_\bk^*(t) ) )
        \omega_0( W( \Fk{B} \eta_\bk^*(t) ) )
        \,, \\
    \rho_{35}
    &=
        e^{-2\ii t \Delta\ts{C} }
        e^{ \ii (\vartheta\ts{CA} - \vartheta\ts{BC}) }
        \omega_0( W( \Fk{C} \eta_\bk^*(t) ) )
        \,, \\
    \rho_{36}
    &=
        e^{-2\ii t (\Delta\ts{A} - \Delta\ts{B} + \Delta\ts{C}) }
        e^{\Xi\ts{AB}} e^{\Xi\ts{BC}} e^{-\Xi\ts{CA}} \notag \\
        &\times 
        \omega_0( W( \Fk{A} \eta_\bk^*(t) ) )
        \omega_0( W( \Fk{B} \eta_\bk^*(t) ) ) \notag \\
        &\times 
        \omega_0( W( \Fk{C} \eta_\bk^*(t) ) )
        \,, \\
    \rho_{37}
    &=
        e^{-2\ii t \Delta\ts{A} }
        e^{ \ii (-\vartheta\ts{AB} + \vartheta\ts{CA}) }
        \omega_0( W( \Fk{A} \eta_\bk^*(t) ) )
        \,, \\
    \rho_{38}
    &=
        e^{-2\ii t (\Delta\ts{A} + \Delta\ts{C}) }
        e^{ \ii (-\vartheta\ts{AB} - \vartheta\ts{BC}) }
        e^{-\Xi\ts{CA}} \notag \\
        &\times 
        \omega_0( W( \Fk{C} \eta_\bk^*(t) ) )
        \omega_0( W( \Fk{A} \eta_\bk^*(t) ) )
        \,, \\
    \rho_{45}
    &=
        e^{2\ii t (\Delta\ts{A} - \Delta\ts{B} - \Delta\ts{C}) }
        e^{\Xi\ts{AB}} e^{-\Xi\ts{BC}} e^{\Xi\ts{CA}} \notag \\
        &\times 
        \omega_0( W( \Fk{A} \eta_\bk^*(t) ) )
        \omega_0( W( \Fk{B} \eta_\bk^*(t) ) ) \notag \\
        &\times 
        \omega_0( W( \Fk{C} \eta_\bk^*(t) ) )
        \,, \\
    \rho_{46}
    &=
        e^{-2\ii t \Delta\ts{C} }
        e^{ \ii (-\vartheta\ts{CA} + \vartheta\ts{BC}) }
        \omega_0( W( \Fk{C} \eta_\bk^*(t) ) )
        \,, \\
    \rho_{47}
    &=
        e^{-2\ii t \Delta\ts{B} }
        e^{ \ii (-\vartheta\ts{AB} + \vartheta\ts{BC}) }
        \omega_0( W( \Fk{B} \eta_\bk^*(t) ) )
        \,, \\
    \rho_{48}
    &=
        e^{-2\ii t (\Delta\ts{B} + \Delta\ts{C}) }
        e^{ \ii (-\vartheta\ts{AB} - \vartheta\ts{CA}) }
        e^{-\Xi\ts{BC}} \notag \\
        &\times 
        \omega_0( W( \Fk{B} \eta_\bk^*(t) ) )
        \omega_0( W( \Fk{C} \eta_\bk^*(t) ) )
        \,, \\
    \rho_{56}
    &=
        e^{-2\ii t (\Delta\ts{A} - \Delta\ts{B}) }
        e^{ \ii (-\vartheta\ts{CA} + \vartheta\ts{BC}) }
        e^{\Xi\ts{AB}} \notag \\
        &\times 
        \omega_0( W( \Fk{A} \eta_\bk^*(t) ) )
        \omega_0( W( \Fk{B} \eta_\bk^*(t) ) )
        \,, \\
    \rho_{57}
    &=
        e^{-2\ii t (\Delta\ts{A} - \Delta\ts{C}) }
        e^{ \ii (-\vartheta\ts{AB} + \vartheta\ts{BC}) }
        e^{\Xi\ts{CA}} \notag \\
        &\times 
        \omega_0( W( \Fk{C} \eta_\bk^*(t) ) )
        \omega_0( W( \Fk{A} \eta_\bk^*(t) ) )
        \,, \\
    \rho_{58}
    &=
        e^{-2\ii t \Delta\ts{A} }
        e^{ \ii (-\vartheta\ts{AB} - \vartheta\ts{CA}) }
        \omega_0( W( \Fk{A} \eta_\bk^*(t) ) )
        \,, \\
    \rho_{67}
    &=
        e^{-2\ii t (\Delta\ts{B} - \Delta\ts{C}) }
        e^{ \ii (-\vartheta\ts{AB} + \vartheta\ts{CA}) }
        e^{\Xi\ts{BC}} \notag \\
        &\times 
        \omega_0( W( \Fk{B} \eta_\bk^*(t) ) )
        \omega_0( W( \Fk{C} \eta_\bk^*(t) ) )
        \,, \\
    \rho_{68}
    &=
        e^{-2\ii t \Delta\ts{B} }
        e^{ \ii (-\vartheta\ts{AB} - \vartheta\ts{BC}) }
        \omega_0( W( \Fk{B} \eta_\bk^*(t) ) )
        \,, \\
    \rho_{78}
    &=
        e^{-2\ii t \Delta\ts{C} }
        e^{ \ii (-\vartheta\ts{CA} - \vartheta\ts{BC}) }
        \omega_0( W( \Fk{C} \eta_\bk^*(t) ) )
        \,.
\end{align*}
Here, $\vartheta_{ij}, \omega_0( W( F_\bk^{(j)} \eta_\bk^*(t) ) )$, and $\Xi_{ij}$ are given in Eq.~\eqref{eq:explicit forms for phases}, with appropriate coupling constants.

\bibliography{ref}
\end{document}